\let\doendproof\endproof
\renewcommand\endproof{~\hfill\qed\doendproof}
\title{Grid Minors in Damaged Grids}
\author{David Eppstein}
\institute{Department of Computer Science, University of California, Irvine}
\begin{document}
\maketitle

\begin{abstract}
We prove upper and lower bounds on the size of the largest square grid graph that is a subgraph, minor, or shallow minor of a graph in the form of a larger square grid from which a specified number of vertices have been deleted. Our bounds are tight to within constant factors. We also provide less-tight bounds on analogous problems for higher-dimensional grids.
\end{abstract}

\pagestyle{plain}

\section{Introduction}
Let $\mathcal{F}$ be a fixed minor-closed family of graphs,\footnote{Recall that a \emph{minor} of an undirected graph $G$ is a graph formed from $G$ by edge contractions, edge deletions, and vertex deletions, and a class $\mathcal{F}$ of graphs is \emph{minor-closed} if every minor of a graph in $\mathcal{F}$ remains in $\mathcal{F}$. To avoid exceptional cases we do not consider the family of all graphs to be minor-closed. The \emph{treewidth} of a graph $G$ is one less than the maximum clique size of a graph that contains $G$ as a subgraph, has no induced cycles of length greater than three, and minimizes the clique size among graphs with these properties.}  let $G$ be a graph in $\mathcal{F}$ with treewidth $t$, and suppose that $k$ vertices are deleted from $G$. What can we say about the treewidth of the remaining graph $G'$? If we did not constrain $G$ to belong to $\mathcal{F}$, the obvious answer would be that the treewidth of $G'$ is at most $t-k$, as the $(t+1)$-clique provides an example for which this bound is tight. However, in this paper we show that membership of $G$ in $\mathcal{F}$ causes $G'$ to have treewidth that, for sufficiently large values of $k$, is significantly larger than $t-k$. For instance, when $k=t$, the $t-k$ bound becomes trivial, but the remaining treewidth will still always be proportional to $t$ (with a constant of proportionality depending on $\mathcal{F}$). Our proof of this result follows from a reformulation of the treewidth problem as a seemingly much more specialized question about grid graphs, where we define an $n\times n$ grid graph to be the graph with $n^2$ vertices at the points $(i,j), 0\le i,j<n$ of the two-dimensional integer lattice, with edges connecting points at unit distance apart. If the $n\times n$ grid graph is damaged by the removal of $m$ of its vertices, how large a grid must exist as a minor of the remaining graph?

Questions about the size of grid minors are central to Robertson and Seymour's work on the theory of graph minors, and notoriously difficult. Following earlier analogous results for infinite graphs~\cite{Hal-MN-65,Die-AMSUH-04}, Robertson and Seymour showed the existence of a non-constant function $f$ such that every graph of treewidth~$t$ has a grid minor of size $f(t)\times f(t)$, and subsequent researchers have provided improved bounds on~$f$~\cite{RobSey-JCTB-91,RobSeyTho-JCTB-94,Ree-SC-97,DieJenGor-JCTB-99,BirBonRee-DAM-09,CheChu-13}. The growth rate of $f$ is not known: its known upper and lower bounds are both polynomial in~$t$, but with different exponents~\cite{CheChu-13}. There has also been much research on related problems for special classes of graphs~\cite{RobSeyTho-JCTB-94,DemHaj-Comb-08,DemHajKaw-Algo-09,GuTam-Algo-12,KawKob-STACS-12,Gri-DMTCS-11} or other structures than grids~\cite{ReeWoo-EJC-12}. In particular, for every fixed minor-closed family $\mathcal{F}$, and for all graphs $G$ in $\mathcal{F}$, the treewidth of $G$ and the side length of the largest grid minor of $G$ are within constant factors of each other (with the factors depending on~$\mathcal{F}$)~\cite{DemHaj-Comb-08}. It follows that the questions above about the effect on treewidth of vertex deletions in a minor-closed family $\mathcal{F}$ and about the largest grid minor in a damaged grid have answers that are within a constant factor of each other.

Finding smaller but undamaged grids within damaged grids is also a classical question from distributed computing, where the grid graph is assumed to represent the processors and communication links of an ideal distributed computing system, and one would like to be able to run the system as if it were in its ideal state even when some of its processors have become faulty. Cole, Maggs, and Sitaraman~\cite{ColMagSit-SJC-97} showed that a faulty system can simulate the ideal system as long as the number of faults is at most $n^{1-\epsilon}$ for some $\epsilon>0$. However, their simulation does not find a grid-like subgraph of the remaining undamaged part of the initial grid. Instead, it involves loading some processors and links with the work that in the original grid would have gone to a multiple processors and links, and it has $\Theta(n)$ startup time making it efficient only for simulations with high running time. Indeed, prior research had already shown that, if an ideal grid is to be emulated by a damaged grid of the same size by directly embedding the ideal grid into the damaged grid (with constant load, dilation, and congestion) then only a constant number of faults can be tolerated~\cite{KakKarLei-FOCS-90}. A graph-theoretic formulation of this distributed computing problem  that is intermediate between the subgraph and simulation approaches is to ask for the largest \emph{shallow grid minor} in a damaged grid. Here the shallowness of the minor encapsulates the requirement that each communications link in the minor be simulated by a small number of links in the original network~\cite{PloRaoSmi-SODA-94,Ten-CGTA-98,Wul-FOCS-11,NesOss-12}.

A third motivation for the damaged grid problem comes from recent work on approximation algorithms for the minimum genus of a graph embedding. Chekuri and Sidiropoulos~\cite{CheSid-ms-12} show that it is possible to find an embedding whose genus is at most a fixed polynomial of the genus of the optimal embedding; their method uses the problem of finding a large grid in a damaged grid as one of its steps. An early version of their work observed that an $n\times n$ grid graph with $m$ damaged vertices always has an undamaged grid \emph{subgraph} of size $O(\tfrac{n}{\sqrt{m}})\times O(\tfrac{n}{\sqrt{m}})$. This bound turns out to be tight to within a constant factor. However, their method can use minors in place of subgraphs, and using the bounds on grid minors that we prove here allowed them to reduce the exponent of the polynomial describing their solution quality.

In this paper we determine to within a constant factor the size of the largest grid minor that is guaranteed to exist in a damaged grid: it is $\Theta(\min\{n,n^2/m\})$. In particular, if $cn$ vertices are deleted from an $n\times n$ grid (for any constant $c>0$), the remaining graph still necessarily contains a grid minor of linear size.
We then generalize this result in two ways, to higher dimensional grids (both grids of bounded dimension and unbounded side length, and hypercubes of unbounded dimension) and to shallow minors. Our results are constructive, in the sense that they lead directly to simple and practical algorithms for finding grid minors in damaged grids, avoiding the high constant factors of many results in graph minor theory.

As a notational convention, we use $\log$ to mean the base two logarithm, and $\ln$ to mean the natural logarithm. In many cases these functions appear inside $O$-notation in which case the constant factor distinguishing them is irrelevant.

\section{Planar grid subgraphs}
For completeness, we briefly repeat the argument of Chekuri and Sidiropoulos showing that every $n\times n$ grid with $m$ damaged vertices has an undamaged grid of size $O(\tfrac{n}{\sqrt{m}})\times O(\tfrac{n}{\sqrt{m}})$, and that this is tight.

\begin{theorem}[Chekuri and Sidiropoulos]
\label{thm:subgraph}
If $G$ is an $n\times n$ grid graph, and $D$ is a set of $m$ vertices in $G$, then $G$ has a grid of size $k\times k$ that is disjoint from $D$, where $k=\left\lfloor\frac{n}{\lceil\sqrt{m+1}\rceil}\right\rfloor$.
\end{theorem}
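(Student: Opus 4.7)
The plan is a straightforward pigeonhole argument based on a grid partition. Set $s=\lceil\sqrt{m+1}\,\rceil$, so that $s^2\ge m+1$, and let $k=\lfloor n/s\rfloor$, which is the value appearing in the statement. Since $ks\le n$, an $s\times s$ array of disjoint axis-aligned $k\times k$ subgrids fits inside the $n\times n$ grid (e.g.\ with the block in row $i$ and column $j$ occupying vertices $(x,y)$ with $ik\le x<(i+1)k$ and $jk\le y<(j+1)k$).

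The next step is to count blocks versus damage. There are $s^2\ge m+1$ disjoint candidate blocks, but only $|D|=m$ damaged vertices, so by the pigeonhole principle some block contains no vertex of $D$. That block is, by construction, an undamaged $k\times k$ grid subgraph of $G$, proving the theorem.

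There is essentially no obstacle here beyond verifying the arithmetic: one must check that $s^2\ge m+1$ (immediate from $s\ge\sqrt{m+1}$) and that the total side length $ks$ of the tiled region is at most $n$ (immediate from the definition of the floor). No induction or case analysis is needed, and the argument is constructive in the sense that one can find the undamaged subgrid by simply scanning the $s^2$ candidate positions.
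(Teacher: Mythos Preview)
Your proof is correct and follows exactly the same pigeonhole argument as the paper: partition the grid into $\lceil\sqrt{m+1}\rceil^2>m$ disjoint $k\times k$ subgrids, one of which must miss $D$. The only difference is that you spell out the arithmetic checks ($s^2\ge m+1$ and $ks\le n$) that the paper leaves implicit.
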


\begin{proof}
Partition $G$ into $\lceil\sqrt{m+1}\rceil\times\lceil\sqrt{m+1}\rceil>m$ smaller grids, of size $k\times k$ (possibly with some rows and columns left over). Because there are more than $m$ of these smaller grids, at least one has to be disjoint from $D$.
\end{proof}

\begin{figure}[t]
\centering\includegraphics[height=2in]{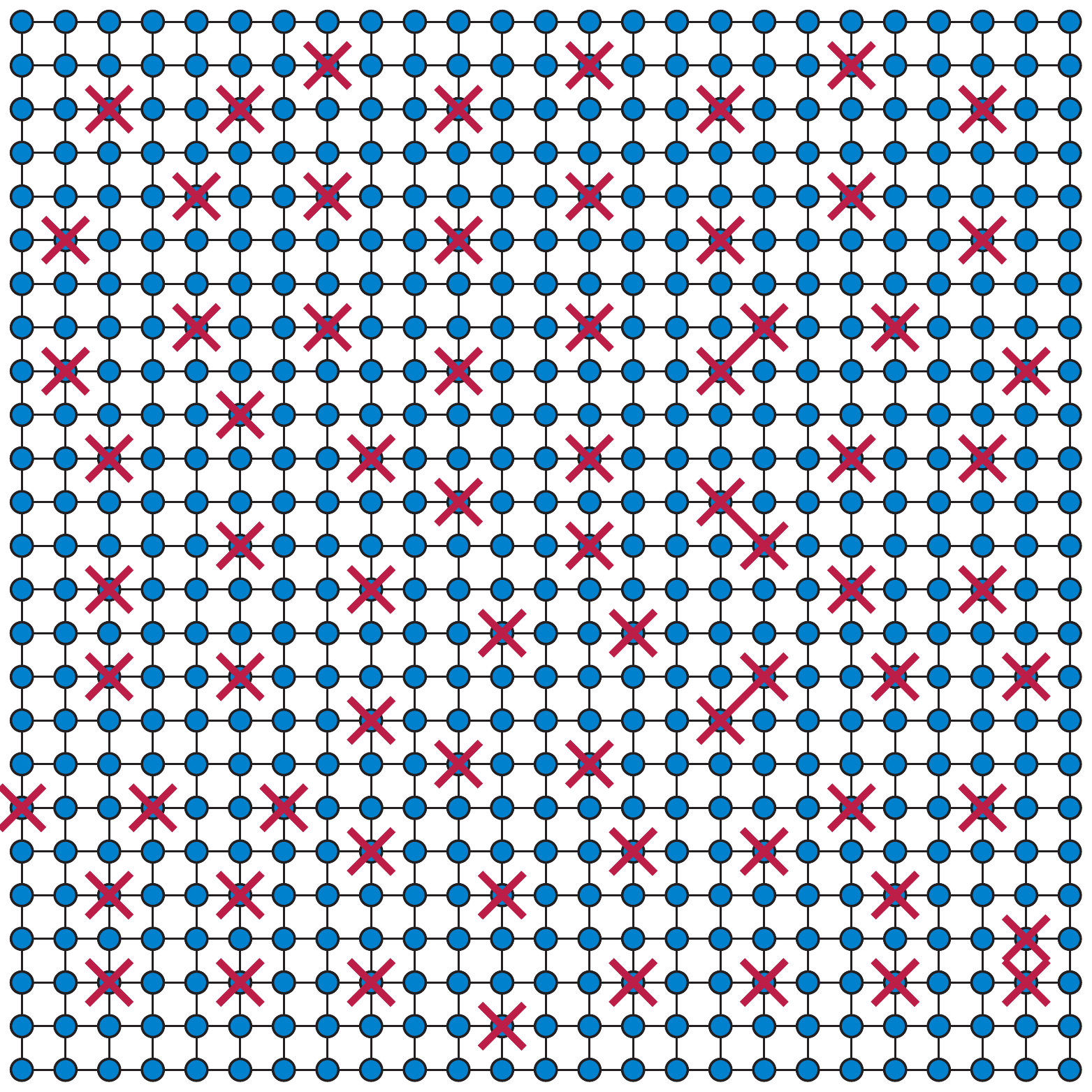}\qquad
\includegraphics[height=2in]{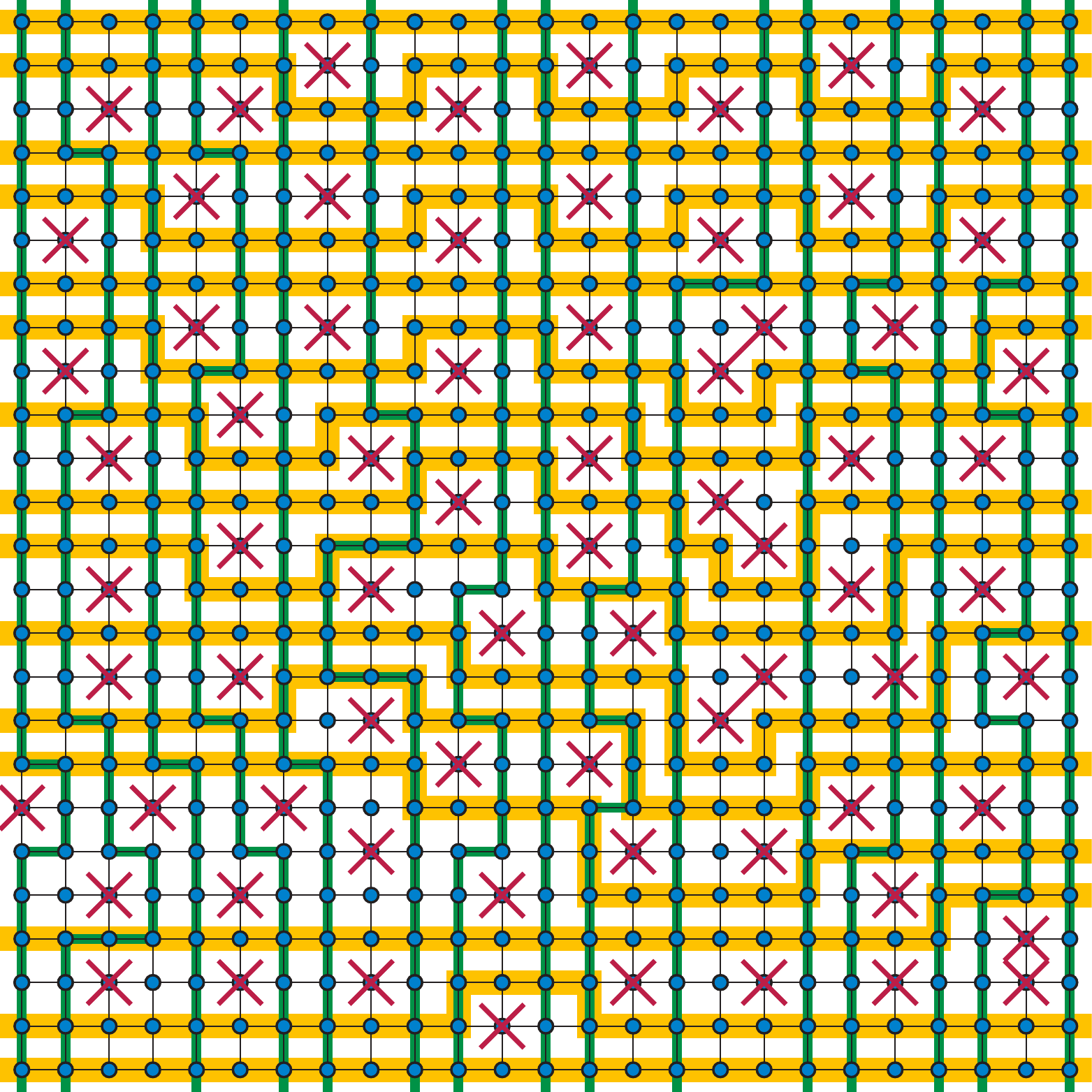}
\caption{Left: A $25\times 25$ grid with 72 damaged vertices. There are no undamaged $3\times 3$ grid subgraphs, but many undamaged $2\times 2$ subgraphs. Right: Sets of disjoint paths from one side of the same damaged grid to the other, avoiding the damaged vertices and forming a $15\times 15$ grid minor.}
\label{fig:grid-holes}
\end{figure}

For instance, Figure~\ref{fig:grid-holes}(left) has $n=25$ and $m=72$.  For these numbers, $\lceil\sqrt{m+1}\rceil=9$ and $k=2$. For this example, the bound of Theorem~\ref{thm:subgraph} is tight: there are no $3\times 3$ undamaged subgrids, but if we partition it into 144 small $2\times 2$ subgrids (with one row and column left over) then many of them must be undamaged.

A matching upper bound on the size of a grid subgraph is also possible.
If we place the vertices of $D$ at positions whose coordinates are both congruent to $-1$ modulo $k+1$, in a coordinate system for which one of the grid corners is $(0,0)$, then the total number of vertices placed is $\lfloor n/(k+1)\rfloor^2$, and the largest remaining square grid subgraph has size $k\times k$.
The inequality $\lfloor n/(k+1)\rfloor^2\le m$ has as its maximal solution the same choice of $k$ as in Theorem~\ref{thm:subgraph}, $k=\left\lfloor\frac{n}{\lceil\sqrt{m+1}\rceil}\right\rfloor$.

Our bound for grid minors will use this same basic idea as Theorem~\ref{thm:subgraph}, of partitioning into smaller grids in order to make the damage in at least one subgrid sparser than in the original grid, but will find a subgrid that is only lightly damaged rather than one that is not damaged at all.

\section{Planar grid minors}

When forming a grid minor, rather than a grid subgraph, we may tolerate a greater amount of damage to the original grid, because there are more ways of forming minors than subgraphs.
One very versatile way of forming grid minors is by finding many disjoint paths across the grid, as shown in Figure~\ref{fig:grid-holes}(right).

\begin{lemma}
\label{lem:disjoint-paths}
Suppose that an $n\times n$ grid is damaged by the deletion of a set of $m$ vertices, but that we can find a collection of $k$ vertex-disjoint paths extending horizontally from one vertical side of the grid to the other, and another collection of $k$ vertex-disjoint paths extending vertically from one horizontal side of the grid to the other. Suppose also that each horizontal-vertical pair has a connected intersection. Then the damaged grid contains a $k\times k$ grid minor.
\end{lemma}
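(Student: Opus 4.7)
The plan is to realize a $k\times k$ grid minor by building $k^2$ connected, pairwise disjoint branch sets $B_{ij}$ out of segments of the given paths. Before defining them, I use planarity to impose a consistent labeling: since the $k$ vertical paths are pairwise vertex-disjoint simple paths joining the top and bottom sides of the grid, they admit a canonical left-to-right ordering, and I relabel them $V_1,\dots,V_k$ in that order. An analogous top-to-bottom ordering for the horizontal paths yields labels $H_1,\dots,H_k$. A Jordan-curve-type argument then shows that, as $H_i$ is traversed left to right, it meets the $V_j$'s in the order $j=1,2,\dots,k$, with the symmetric statement holding on each $V_j$; this is where the planarity of the grid and the disjointness within each family of paths are crucial.

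Next, let $I_{ij}=H_i\cap V_j$; by hypothesis this induces a connected subgraph of the grid. I split $H_i$ into $k$ consecutive subpaths $P_{i1}^H,\dots,P_{ik}^H$ by cutting $H_i$ just before the first vertex of $I_{i,j+1}$ it encounters for each $j<k$, so that $P_{ij}^H$ contains $I_{ij}$ and $P_{ik}^H$ continues to the right end of $H_i$. I define $P_{ij}^V$ analogously along each $V_j$, and set
\[
B_{ij}=P_{ij}^H\cup I_{ij}\cup P_{ij}^V.
\]
Each $B_{ij}$ is connected because $I_{ij}$ is connected by hypothesis and meets both $P_{ij}^H$ and $P_{ij}^V$. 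The branch sets are pairwise vertex-disjoint: two branch sets in the same row lie in disjoint subpaths of $H_i$ together with subpaths of different (hence disjoint) $V_j$'s, and columns are symmetric; for branch sets in different rows and columns, any shared vertex would have to lie in some intersection $I_{i''j''}$, but by construction each $I_{i''j''}$ is placed into exactly one branch set.

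Finally I verify adjacency: consecutive subpaths $P_{ij}^H$ and $P_{i,j+1}^H$ of $H_i$ are joined by a single edge of $H_i$ at the cut point, giving an edge between $B_{ij}$ and $B_{i,j+1}$; the symmetric statement on $V_j$ provides the edge between $B_{ij}$ and $B_{i+1,j}$. These are exactly the adjacencies of the $k\times k$ grid, completing the minor. The main obstacle I anticipate is formalizing the consistent ordering and placing the cuts carefully enough that disjointness is immediate across all pairs of branch sets; once those are settled the rest is bookkeeping.
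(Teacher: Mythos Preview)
Your proof is correct and follows essentially the same approach as the paper's. The paper's version is terser: it deletes everything not on the paths, contracts each intersection $I_{ij}$ to a vertex and each between-intersection segment of a path to an edge, and invokes the Jordan curve theorem to ensure the intersections occur in the correct order along each path. Your version spells out explicit branch sets $B_{ij}$ and checks connectivity, disjointness, and adjacency by hand; the only cosmetic difference is that you absorb each between-intersection segment into one of the two adjacent branch sets rather than contracting it to an edge, which yields the same minor.
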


\begin{proof}
Delete any edges and vertices of the grid that do not belong to the paths.
Contract each intersection between a horizontal and vertical path to form each of the grid vertices, and contract the portion of each path between two of these intersections to form the grid edges.
A version of the Jordan curve theorem ensures that the intersections on each path lie in the correct order, so the contracted graph is the desired grid.
\end{proof}

Figure~\ref{fig:grid-holes}(right) shows the paths of the lemma as yellow and green. In this example, there are $k=15$ paths of each type, so the lemma gives us a $15\times 15$ grid minor, much larger than the largest undamaged grid subgraph.

\begin{figure}[t]
\centering\includegraphics[height=2in]{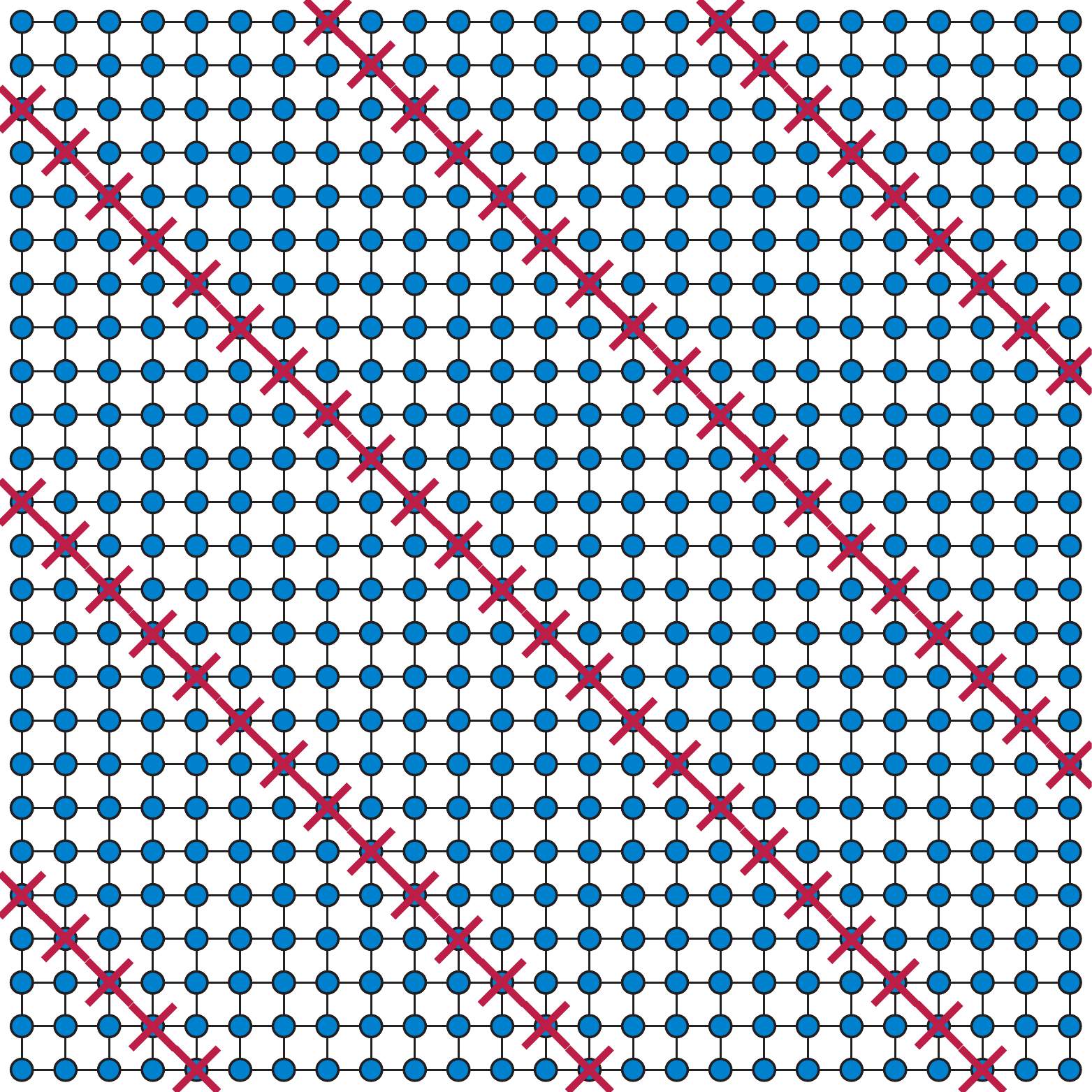}\qquad
\includegraphics[height=2in]{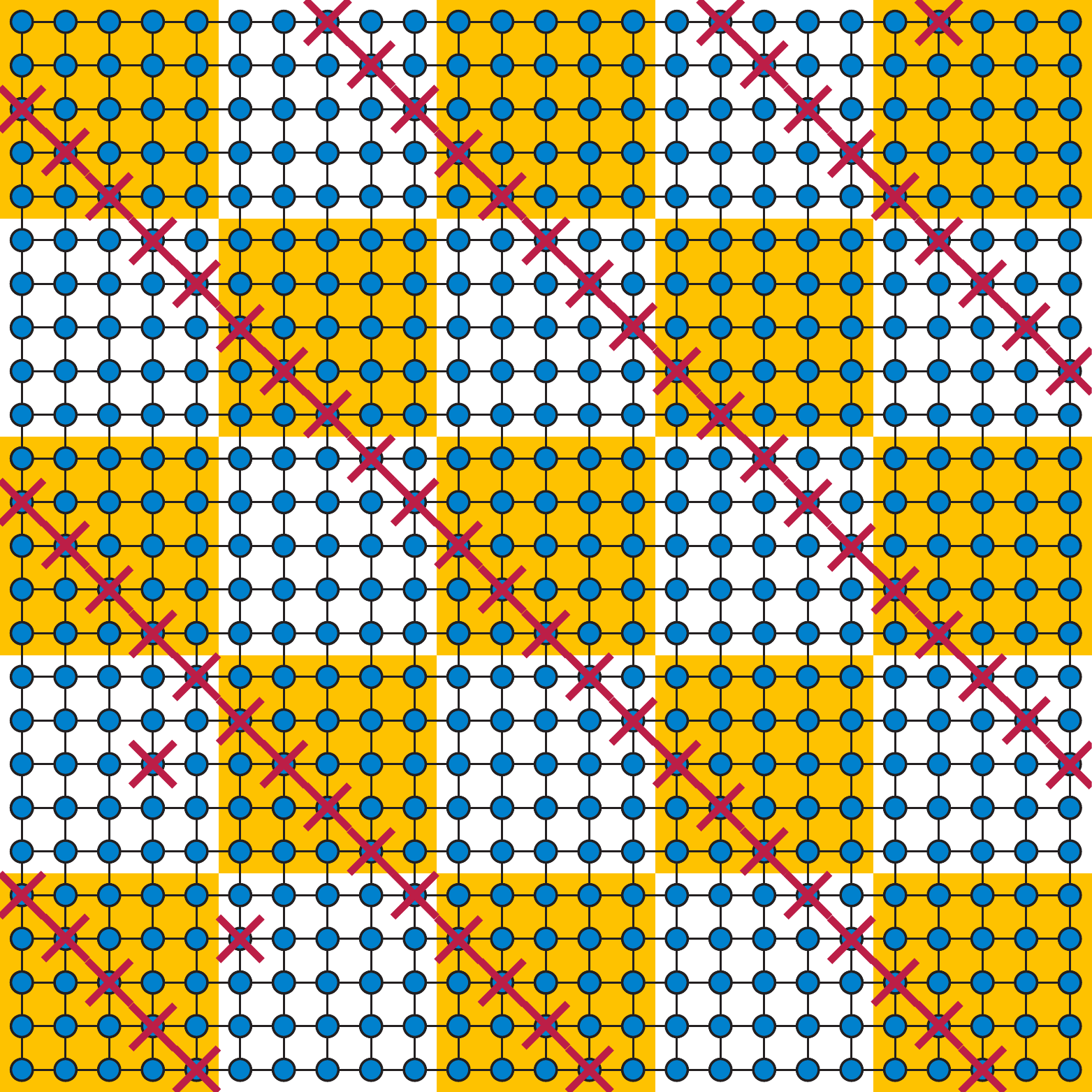}
\caption{Left: A $25\times 25$ grid with 69 damaged vertices in the pattern given by the proof of Theorem~\ref{thm:grid-minor-lb}. As shown in that theorem, its largest square grid minor is $4\times 4$. Right: Even with 72 damaged vertices, partitioning the grid into 25 $5\times 5$ subgrids leads to a subgrid with at most two damaged vertices, which necessarily has a $3\times 3$ grid minor.}
\label{fig:grid-subgrids}
\end{figure}

\begin{theorem}
\label{thm:grid-minor}
Suppose that an $n\times n$ grid is damaged by the deletion of a set $D$ of $m$ vertices.
Then the remaining graph has a grid minor of size $k\times k$, where $k=\max\{n-m,n^2/4m-O(1)\}=\Theta(\min\{n,n^2/m\})$.
\end{theorem}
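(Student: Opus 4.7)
My plan is to establish the two branches of the maximum separately: the bound $k = n-m$ by a direct ``missing row and column'' argument, and the bound $k = n^2/(4m) - O(1)$ by subdividing the grid into blocks of carefully chosen size and applying the first argument inside a lightly damaged block.

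For the first branch, I observe that each damaged vertex lies in exactly one row and one column, so at most $m$ rows and at most $m$ columns contain any damage. Hence there are at least $n-m$ rows and at least $n-m$ columns that are entirely undamaged. Choose $n-m$ undamaged rows $r_1 < r_2 < \dots < r_{n-m}$ and $n-m$ undamaged columns $c_1 < c_2 < \dots < c_{n-m}$; each such row provides a horizontal path from the left side to the right side, each such column provides a vertical path from the top to the bottom, and the intersection of row $r_i$ with column $c_j$ is the single undamaged vertex $(r_i, c_j)$, hence connected. Lemma~\ref{lem:disjoint-paths} then immediately yields an $(n-m)\times(n-m)$ grid minor.

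For the second branch, I partition the $n\times n$ grid into an $s\times s$ array of subgrids, each of side length $\lfloor n/s \rfloor$ (possibly leaving a thin border at two sides). Because there are $s^2$ subgrids and only $m$ damaged vertices, pigeonhole gives a subgrid that contains at most $\lfloor m/s^2 \rfloor$ damaged vertices. Applying the first branch inside that subgrid produces a grid minor of side at least $\lfloor n/s \rfloor - \lfloor m/s^2 \rfloor \ge n/s - m/s^2 - 1$. The function $s\mapsto n/s - m/s^2$ is maximized over positive reals at $s = 2m/n$, where it takes the value $n^2/(4m)$, so I choose $s$ to be the positive integer closest to $2m/n$ (and fall back to the first branch when this would be less than $1$). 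Plugging in and using standard bounds on the rounding error gives a minor of side at least $n^2/(4m) - O(1)$, as required.

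The main obstacle I expect is the bookkeeping to convert the continuous optimum at $s = 2m/n$ into a clean bound after rounding $s$ to an integer and rounding $\lfloor n/s\rfloor$ and $\lfloor m/s^2\rfloor$: I need to verify that the resulting loss is absorbed into the $O(1)$ additive term uniformly in $n$ and $m$, and that my choice of $s$ remains valid (i.e.\ at least $1$ and at most $n$) in the entire parameter range where the $n^2/(4m)$ branch is the dominant one. The $\Theta(\min\{n, n^2/m\})$ reformulation is then just the observation that $n-m$ and $n^2/(4m)$ agree up to constants when $m = \Theta(n)$, with the first term controlling the $m = o(n)$ regime and the second controlling $m = \omega(n)$.
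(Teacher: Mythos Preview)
Your proposal is correct and matches the paper's proof essentially step for step: the paper also uses undamaged rows and columns for the $n-m$ branch via Lemma~\ref{lem:disjoint-paths}, and for the second branch partitions into roughly $4(m/n)^2$ subgrids of side roughly $n^2/(2m)$, which is exactly your optimal choice $s=2m/n$. The only cosmetic difference is that the paper states the subgrid size directly while you derive it by optimizing $n/s - m/s^2$; the rounding caveats you flag are handled in the paper at the same informal ``approximately'' level.
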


\begin{proof}
To achieve $k=n-m$, observe that the rows and columns of the grid that are disjoint from $D$ form two sets of at least $n-m$ disjoint paths, as required by Lemma~\ref{lem:disjoint-paths}.

To achieve $k=n^2/4m-O(1)$, partition the given grid into approximately $4(m/n)^2$ subgrids of size approximately $n^2/(2m)\times n^2/(2m)$. The average number of damaged vertices per subgrid is $\tfrac{m}{4(m/n)^2}=n^2/4m$. There exists at least one subgrid whose number of damaged vertices is at most this average, which is half of the side length of the subgrid. Within this subgrid we may apply the $n'-m'$ bound (where $n'$ is the side length of the subgrid and $m'$ is its number of damaged vertices) giving us a minor of size approximately $n^2/(2m)-n^2/4m=n^2/4m$.

Finally, we observe that if $m<n/2$ then $\max\{n-m,n^2/4m-O(1)\}=\Theta(n)$ while if $m\ge n/2$ then $\max\{n-m,n^2/4m-O(1)\}=\Theta(n^2/m)$, so our bound achieves the stated asymptotics.
\end{proof}

\begin{corollary}
Let $\mathcal{F}$ be a fixed minor-closed family of graphs, let $G$ be a graph of treewidth $t$ in $\mathcal{F}$, and let $G'$ be formed by deleting $k$ vertices of $G$.
Then the treewidth of $G'$ is $\Omega(\min(t,t^2/k))$.
\end{corollary}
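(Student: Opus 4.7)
The plan is to reduce the corollary to Theorem~\ref{thm:grid-minor} by exploiting the equivalence between treewidth and grid-minor size for graphs in a fixed minor-closed family.  First I would invoke the result of Demaine and Hajiaghayi~\cite{DemHaj-Comb-08} cited in the introduction: because $\mathcal{F}$ is a fixed minor-closed family (and, by assumption, is not the family of all graphs), any $G\in\mathcal{F}$ of treewidth $t$ contains an $n\times n$ grid minor with $n=\Omega(t)$, where the hidden constant depends only on $\mathcal{F}$.  Fix such a grid minor $H$, witnessed by pairwise-disjoint connected branch sets $\{B_v\subseteq V(G):v\in V(H)\}$ together with edges between adjacent branch sets.

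Next I would transfer the vertex deletions from $G$ to the grid $H$.  Each vertex in the deleted set $D$ of size $k$ belongs to at most one branch set, so at most $k$ branch sets are affected.  Let $V'\subseteq V(H)$ be the set of grid vertices whose branch sets are entirely contained in $V(G)\setminus D$; then $|V(H)\setminus V'|\le k$.  The branch sets $\{B_v:v\in V'\}$, together with the edges of $G'$ that connect adjacent pairs of them, witness the induced subgrid $H[V']$ as a minor of $G'=G\setminus D$.  In other words, $H[V']$ is precisely an $n\times n$ grid graph that has been damaged by the deletion of at most $k$ vertices.

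Now I would apply Theorem~\ref{thm:grid-minor} to $H[V']$, which produces a $k'\times k'$ grid minor with $k'=\Theta(\min\{n,n^2/k\})=\Theta(\min\{t,t^2/k\})$.  Since minors of minors are minors, this grid is itself a minor of $G'$.  A $k'\times k'$ grid has treewidth $k'$, and treewidth is monotone under taking minors, so $G'$ has treewidth at least $k'=\Omega(\min\{t,t^2/k\})$, as required.

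The only delicate point, and the main thing to check carefully, is the reduction from deleting vertices of $G$ to deleting vertices of $H$: one must confirm that discarding the (at most $k$) entire branch sets that meet $D$ really does leave a valid minor model of $H[V']$ inside $G'$.  This follows because the remaining branch sets are unaffected, stay connected, and keep all the inter-branch-set edges required by $H[V']$; none of those edges can be lost since they run between branch sets that do not meet $D$.  Once this is in place, the grid-minor bound from the previous section does all the work, and the constants depend only on $\mathcal{F}$ through the Demaine--Hajiaghayi constant.
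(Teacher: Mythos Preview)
Your proposal is correct and follows essentially the same route as the paper: invoke the Demaine--Hajiaghayi linear relationship between treewidth and grid-minor size to pass from $G$ to a large grid, then apply Theorem~\ref{thm:grid-minor}. The paper's own proof is a two-line ``follows immediately'' that leaves the branch-set transfer argument implicit; you have correctly unpacked that step, and your verification that deleting at most $k$ branch sets yields a valid minor model of the damaged grid in $G'$ is exactly the detail the paper suppresses.
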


\begin{proof}
This follows immediately from Theorem~\ref{thm:grid-minor} and from the fact that, for graphs in $\mathcal{F}$, the treewidth is both upper bounded and lower bounded by linear functions of the grid minor size.
\end{proof}

As an example of Theorem~\ref{thm:grid-minor}, with a $25\times 25$ grid and $72$ damaged vertices, we may partition the grid into $25$ subgrids of size $5\times 5$. The average number of damaged vertices per subgrid is $72/25<3$, so there exists a subgrid with at most two damaged vertices, within which the undamaged rows and columns form sets of disjoint paths that give a $3\times 3$ grid minor; see Figure~\ref{fig:grid-subgrids}(right).

\begin{figure}[t]
\centering\includegraphics[width=\textwidth]{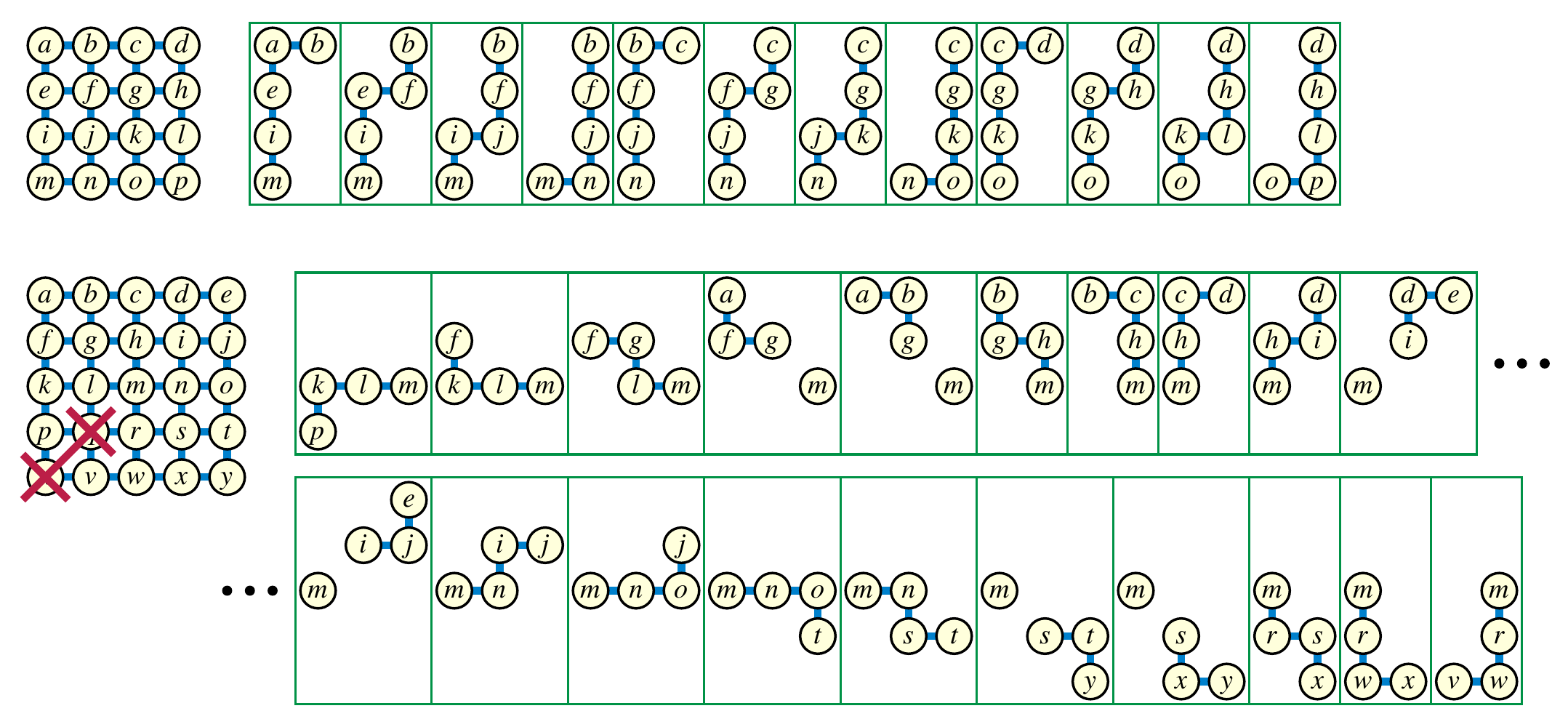}
\caption{Path decompositions. Top: decomposition of a $4\times 4$ grid with width~$4$. Bottom: decomposition of a $5\times 5$ grid with two damaged vertices, with width~$3$. Both of these decompositions have minimum width, equal to the pathwidth of their graphs.}
\label{fig:path-decompositions}
\end{figure}

An upper bound of the same form, $\Theta(\max\{n,n^2/m\})$, may be achieved by making the damaged set consist of a subset of the rows and columns of the grid, spaced far enough apart to make the total size of the damaged set be $m$. Every remaining connected component is itself a grid of the given size. However, we may achieve a better constant factor in this bound by applying the concepts of \emph{path decompositions} and \emph{pathwidth}~\cite{RobSey-JCTB-83}. A path decomposition of a graph is a sequence of subsets of vertices (called \emph{bags}) with the two properties that every vertex appears in a contiguous subsequence of bags and that the two endpoints of every edge appear in at least one bag. The width of a path decomposition is one less than the size of the largest bag, and the pathwidth of a graph is the minimum width of any of its path decompositions. Pathwidth is minor-monotonic (the pathwidth of a minor of $G$ is no more than the pathwidth of $G$), and the $n\times n$ grid has pathwidth exactly~$n$ (Figure~\ref{fig:path-decompositions}, top)~\cite{Bod-TCS-98}, so we can prevent a damaged grid from having large grid minors by making the induced subgraph of the undamaged vertices have small pathwidth.

Using pathwidth, we can show that our bound for grid minors in the case that $m$ is small is exact:

\begin{theorem}
For $m\le n/2$, there exists a set $D$ of $m$ damaged vertices in an $n\times n$ grid, such that the largest square grid minor using only undamaged vertices has size $(n-m)\times(n-m)$.
\end{theorem}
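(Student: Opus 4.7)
The plan is to establish tightness by combining two bounds. The lower bound $(n-m)\times(n-m)$ is handed to us for free by Theorem~\ref{thm:grid-minor}: regardless of where the $m$ damaged vertices are placed, the quantity $\max\{n-m,\,n^2/4m-O(1)\}$ is at least $n-m$, so the damaged grid always contains an $(n-m)\times(n-m)$ grid minor. The remaining task is therefore to exhibit one particular $D$ with $|D|=m$ for which no $(n-m+1)\times(n-m+1)$ grid minor exists in $G-D$.

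The approach will be pathwidth-based. Since the $k\times k$ grid has pathwidth exactly $k$ and pathwidth is minor-monotone (both facts highlighted in the paragraph preceding the theorem), it suffices to exhibit a $D$ of size $m$ such that the undamaged subgraph $G-D$ has pathwidth at most $n-m$; this immediately precludes any grid minor of side larger than $n-m$.

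To construct such a $D$, I would generalize the $5\times 5$ example of Figure~\ref{fig:path-decompositions} (bottom), where two damaged vertices suffice to bring the grid's pathwidth down from $5$ to $3$. In general, I would place the $m$ damaged vertices at the ``turnaround'' positions of a snake-style path decomposition, one damaged vertex per turn, spaced so that consecutive turnarounds do not share a bag. The hypothesis $m\le n/2$ is used here: it guarantees enough snake length to accommodate $m$ well-separated turnarounds. The associated path decomposition sweeps vertex by vertex along the modified snake, keeping in each bag the current position together with the undamaged vertices still having unvisited neighbors; each damaged vertex deletes one vertex that would otherwise have had to be carried through the fold, so the maximum bag size drops from $n+1$ (for the undamaged snake) to $n-m+1$.

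The main obstacle is verifying that the bag size never exceeds $n-m+1$ at any step. This calls for a careful inductive case analysis around each of the $m$ turnarounds, confirming that once a damaged vertex has been passed the frontier does not re-expand before the next turn, and that the turnaround spacing prevents the savings from overlapping wastefully. Once this path decomposition of width $n-m$ is in hand, minor-monotonicity of pathwidth yields the desired upper bound on grid minor size, and combining it with the lower bound from Theorem~\ref{thm:grid-minor} shows that the largest square grid minor of $G-D$ has size exactly $(n-m)\times(n-m)$.
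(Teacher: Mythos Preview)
Your overall strategy agrees with the paper's: exhibit a set $D$ for which $G-D$ has pathwidth at most $n-m$, then invoke minor-monotonicity of pathwidth together with the lower bound from Theorem~\ref{thm:grid-minor}. The flaw is in the choice of $D$. Damaging the turnaround vertices of a row-by-row snake does not bring the pathwidth down to $n-m$: every such turnaround lies in one of the two extreme columns, so the $(n-2)\times n$ subgrid on the interior columns survives untouched, and that subgrid alone already has pathwidth $n-2$, which exceeds $n-m$ as soon as $m\ge 3$. The underlying error is the assertion that the savings accumulate. Deleting a turnaround vertex shrinks only those bags that would otherwise have contained that vertex; once the sweep has advanced a full row past the deletion, the frontier is back to its original size $n+1$. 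So $m$ scattered deletions at folds cannot shave $m$ off the \emph{maximum} bag size. (Spiral and other serpentine variants fail for the same reason: their turnarounds cluster near the boundary, leaving a large undamaged interior subgrid whose pathwidth is still close to $n$.)

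The paper's $D$ is different in kind: the $m$ damaged vertices are placed consecutively along the main diagonal, running from one corner toward the centre. The accompanying path decomposition is radial rather than serpentine---for $m=\lfloor n/2\rfloor$ one orders the non-central vertices by angle around the centre, starting just clockwise of the damaged diagonal; for smaller $m$ one additionally puts the undamaged diagonal vertices between the damaged segment and the centre into every bag. The diagonal slit from the corner is what allows a single sweep to keep \emph{every} bag small, which is precisely what your turnaround deletions cannot achieve. Note too that the $5\times 5$ example in Figure~\ref{fig:path-decompositions}(bottom) is this diagonal/radial construction, not a snake.
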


\begin{proof}
We choose $D$ to be a set of $m$ vertices extending diagonally from one corner of the grid towards the center, as shown in Figure~\ref{fig:path-decompositions}(bottom). We claim that the resulting damaged grid has pathwidth exactly $n-m$, and so cannot contain a square grid minor of larger than the stated size. In the case when $m=\lfloor n/2\rfloor$, a path decomposition with width $n-m$ may be found by putting the center vertex into all bags (in the case $n$ is odd) and otherwise making the ordering of the first bag containing each vertex be the radial sorted ordering of these vertices around the center (starting clockwise of the damaged diagonal and breaking ties arbitrarily), as shown in the figure. If $m$ is smaller than $\lfloor n/2\rfloor$, a decomposition with pathwidth $n-m$ may be obtained by using this same pattern for all vertices that do not lie on the diagonal line segment between the damaged corner and the center, and by including the undamaged vertices that do lie on this line segment into all bags.
\end{proof}

We also obtain an upper bound for larger values of $m$ that differs from our lower bound only by a factor of two:

\begin{theorem}
\label{thm:grid-minor-lb}
For $m>n/2$, there exists a set $D$ of $m$ damaged vertices in an $n\times n$ grid, such that the largest square grid minor using only undamaged vertices has size $k\times k$, with $k=\left\lceil\tfrac{n^2}{2m}-\tfrac12\right\rceil$.
\end{theorem}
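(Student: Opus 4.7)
My plan is to take $D$ to be a sparse set of parallel anti-diagonals of the grid, chosen so that the residual graph breaks into narrow diagonal strips whose grid minors can be bounded individually. Specifically, for each offset $c \in \{0, 1, \ldots, 2k\}$ let $D_c = \{(i,j) : i+j \equiv c \pmod{2k+1}\}$; the $D_c$ partition the $n^2$ vertices of $G$, so by averaging some offset $c^*$ gives $|D_{c^*}| \le \lfloor n^2/(2k+1) \rfloor$. The hypothesis $k = \lceil n^2/(2m) - \tfrac12 \rceil$ rearranges to $m(2k+1) \ge n^2$, so $|D_{c^*}| \le m$; if strict, I pad $D$ with arbitrary further damaged vertices to reach exactly $m$, which can only further constrain the remaining grid minors.

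Because every grid edge connects two vertices whose $i+j$ values differ by exactly $1$, removing $D$ disconnects the remainder into ``diagonal strips,'' each consisting of at most $2k$ consecutive undamaged anti-diagonals bordered on both sides by a damaged anti-diagonal or by the grid boundary. The largest square grid minor in $G \setminus D$ therefore equals the maximum such minor across these strips. A $k \times k$ lower bound is straightforward: any width-$2k$ strip of sufficient length (and such strips exist whenever $n$ is appreciably larger than $k$) contains an axis-aligned $k \times k$ subgrid lying entirely between two damaged anti-diagonals.

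The main technical step, and where I expect the main obstacle, is the matching upper bound: no diagonal strip of width $2k$ admits a $(k+1) \times (k+1)$ grid minor. A naive pathwidth argument falls short, since the natural column-cross-section decomposition of such a strip has width $2k - 1$. Instead, I would reason via Lemma~\ref{lem:disjoint-paths} in reverse: a putative $(k+1) \times (k+1)$ grid minor would require $k+1$ vertex-disjoint ``horizontal'' paths together with $k+1$ vertex-disjoint ``vertical'' paths, with pairwise connected intersections. The strip's restrictive adjacency (each vertex has at most two neighbors on each adjacent anti-diagonal and none on its own) and its bipartite structure (anti-diagonals alternate in parity, with only $k$ of each parity class present) should allow me to rule out such a configuration by a pigeonhole argument on how branch sets must distribute across the $2k$ available row levels at each column cross-section. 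Formalising this carefully should complete the upper bound and match the lower bound exactly.
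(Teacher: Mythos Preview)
Your construction of $D$ --- removing every $(2k+1)$st anti-diagonal, with the offset chosen by averaging so that $|D|\le m$ --- is exactly the paper's construction, and your counting that $k=\lceil n^2/(2m)-\tfrac12\rceil$ makes this work is correct.

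The gap is in the upper bound on grid minors of a width-$2k$ diagonal strip, where you explicitly set aside the pathwidth route. You only tried the column sweep, which does give width $2k-1$; but the paper sweeps \emph{along} the strip, ordering vertices by $x-y$ (the direction orthogonal to the damaged anti-diagonals). The point you are missing is a parity observation: since $x+y$ and $x-y$ always have the same parity, a fixed value of $x-y$ meets only $k$ of the $2k$ consecutive anti-diagonals. A path decomposition built from this ordering therefore has width $k$, and since the $(k+1)\times(k+1)$ grid has pathwidth $k+1$, no such grid can occur as a minor. So the ``naive pathwidth argument'' is precisely what the paper uses --- you just picked the wrong sweep direction.

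Your proposed substitute is genuinely problematic. Lemma~\ref{lem:disjoint-paths} is a \emph{sufficient} condition for a grid minor (disjoint paths spanning opposite sides of an ambient grid), so invoking it ``in reverse'' is not licensed. It is true that any $(k+1)\times(k+1)$ minor yields $k+1$ disjoint row-subgraphs and $k+1$ disjoint column-subgraphs inside the strip, but these need not traverse any particular column cross-section, and branch sets may wind freely among the $2k$ anti-diagonals. The pigeonhole you sketch --- distributing branch sets across ``$2k$ row levels at each column cross-section'' --- does not obviously control this behaviour, and making it rigorous would essentially amount to reconstructing a width-$k$ path (or tree) decomposition. The $x-y$ sweep is both simpler and what the paper actually does.
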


\begin{proof}
Number the negatively-sloped diagonals of the grid from $1$ to $2n-1$, choose a number $r$, and damage all the vertices that belong to diagonals numbered $r$ modulo $2k+1$. At least one of the residue classes of diagonals modulo $2k+1$ must have at most $\tfrac{n^2}{2k+1}\le m$ damaged vertices in it. The pathwidth of the remaining sets of $2k$ contiguous diagonals is at most $k$, as may be shown by a path decomposition that sorts the vertices by the linear combination $x-y$ of their Cartesian coordinates, breaking ties arbitrarily. Therefore, the largest square grid minor in the remaining graph can have size at most $k\times k$.
\end{proof}

Figure~\ref{fig:grid-subgrids}(left) shows an example of this diagonal damage pattern for $n=25$ and $k=4$. The pattern of damage shown in the figure reduces the pathwidth of the remaining graph to~4, so its largest undamaged square grid minor has size $4\times 4$.
Theorem~\ref{thm:grid-minor-lb} shows that damaging at most 70 damaged vertices in a $25\times 25$ grid can block the existence of an undamaged $5\times 5$ grid minor, but as the figure shows a careful choice of $r$ leads to only 69 damaged vertices.
\section{Shallow minors}

The \emph{radius} of a graph $G$ is smallest number $r$ such that all vertices of $G$ are within distance $r$ of one of its vertices, called its \emph{center}. That is, the radius is
$$\min_{v\in V(G)}\max_{w\in V(G)} \operatorname{distance}_G(v,w).$$ If $V_i$ are disjoint sets of vertices, each of which induces a subgraph of radius at most $\lambda$, then a minor of $G$ formed by contracting each set $V_i$ to a single vertex, deleting vertices not in any set $V_i$, and possibly deleting some of the resulting edges, is called a \emph{shallow minor} of $G$, at depth $\lambda$. Thus, a shallow minor of unbounded depth (or of depth at least $n-1$) is just a minor, while a shallow minor of depth $0$ is exactly a subgraph. The results of the previous two sections, on grid subgraphs and grid minors, naturally raise the question of how large a square grid can be found as a shallow minor of a given depth $\lambda$ in a damaged grid.

\begin{theorem}
\label{thm:shallow}
Let $D$ be a set of $m$ vertices in an $n\times n$ grid, and let $\lambda\ge 1$ be given. Then there exists a shallow square grid minor at depth $\lambda$ in the grid, disjoint from $D$, of size $k\times k$, where $k=\Omega(\min\{n,n\sqrt{\tfrac{\lambda}{m}},\tfrac{n^2}{m}\})$. For every $n$, $m$, and $\lambda$, it is possible to choose $D$ in such a way that the largest shallow grid minor of the grid has side length $O(\min\{n,n\sqrt{\tfrac{\lambda}{m}},\tfrac{n^2}{m}\})$. The constants in the $O$- and $\Omega$-notation used here do not depend on $\lambda$, $m$, or $n$. 
\end{theorem}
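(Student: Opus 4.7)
The theorem asserts matching $\Omega$ and $O$ bounds on the side of the largest depth-$\lambda$ shallow square grid minor of a damaged grid. My plan is to prove each direction separately, in both cases by case analysis on which of the three terms $n$, $n\sqrt{\lambda/m}$, $n^2/m$ is smallest.

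For the lower bound, I would split into three overlapping regimes. If $m\le\lambda$, the first bound of Theorem~\ref{thm:grid-minor} yields $n-m=\Omega(n)$ undamaged rows and columns, and applying Lemma~\ref{lem:disjoint-paths} produces cross-shaped bags whose arms have length at most half the longest run of consecutive damaged rows, namely $(m+1)/2\le(\lambda+1)/2\le\lambda$, so the minor is shallow at depth $\lambda$. If $m\lambda>n^2/4$, I would use the second construction of Theorem~\ref{thm:grid-minor}, which gives a minor of size $n^2/(4m)=\Omega(n^2/m)$ with bag arms of length $O(n^2/m)=O(\lambda)$, again yielding depth $\le\lambda$. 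The main regime is $\lambda<m$ and $m\lambda\le n^2/4$, where the middle term $n\sqrt{\lambda/m}$ binds: here I would partition the grid into subgrids of side $s=\Theta(n\sqrt{\lambda/m})$ so that the average damage per subgrid is $\Theta(\lambda)$, pick one with damage $O(\lambda)$, and (using $s\ge 2\lambda$ in this regime) apply Lemma~\ref{lem:disjoint-paths} to the $\Omega(s)$ undamaged rows and columns inside it, obtaining a minor of side $\Omega(n\sqrt{\lambda/m})$ whose bags again have radius $O(\lambda)$.

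For the upper bound, the $O(n)$ term is trivial and the $O(n^2/m)$ term is the content of Theorem~\ref{thm:grid-minor-lb}, whose diagonal damage pattern limits the pathwidth, and hence the grid-minor size, of the induced undamaged subgraph. The novel ingredient is a damage pattern whose induced depth-$\lambda$ shallow-minor-width is $O(n\sqrt{\lambda/m})$ in the intermediate regime. I would try placing damage in a periodic array of thick barriers, each $\Theta(\lambda)$ rows tall, spaced at vertical period $\Theta(n\sqrt{\lambda/m})$, so that no bag of radius $\le\lambda$ can span a barrier and every shallow minor of depth $\lambda$ is confined to a horizontal slab of height $O(n\sqrt{\lambda/m})$, whose pathwidth already bounds the grid-minor side by $O(n\sqrt{\lambda/m})$.

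The hardest step will be making this barrier construction match the damage budget exactly. A naive implementation using $\Theta(\lambda)$-thick barriers that span the full width $n$ costs $\Theta(\lambda n)$ damage per barrier and therefore allows only $\Theta(m/(\lambda n))$ barriers, yielding slabs of height $\Theta(\lambda n^2/m)$; this is strictly larger than the desired $O(n\sqrt{\lambda/m})$ throughout the intermediate regime. To obtain the sharp bound one must instead use partial barriers, stagger them horizontally, or superimpose them with a secondary diagonal pattern, so that $\Theta(\lambda)$-thick obstructions can be packed at vertical spacing $\Theta(n\sqrt{\lambda/m})$ within total damage $m$. Verifying that no shallow minor of larger size exists in this more intricate pattern, by bounding the appropriate notion of shallow-minor width of the resulting damaged grid, is the main technical hurdle I would have to overcome.
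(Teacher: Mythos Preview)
Your lower-bound plan is essentially the paper's: same three-regime split (few damages, intermediate, many damages), same partition-into-subgrids-and-use-undamaged-rows/columns argument for the middle regime. The case boundaries differ cosmetically but the content is identical.

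The upper bound for the middle term $n\sqrt{\lambda/m}$ is where your proposal has a real gap. Your instinct to abandon full-width $\Theta(\lambda)$-thick barriers is correct---they are too expensive by exactly the factor you compute---and your suggestion of ``partial barriers'' is in fact what the paper does: it places $\lfloor m/(2\lambda+1)\rfloor$ diagonal segments, each of $2\lambda+1$ vertices, evenly spaced. But the \emph{analysis} you sketch cannot be made to work. You want to argue that no radius-$\lambda$ bag can cross a barrier, confining any shallow minor to a slab of bounded pathwidth. With partial barriers this is simply false: paths (and hence bags and minors) can route \emph{around} the ends of short segments, so nothing is confined to a slab, and the paper explicitly remarks that the undamaged graph has shallow minors of pathwidth $\Omega(n\log n\sqrt{\lambda/m})$---too large for a pathwidth argument to give the bound.

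The paper's argument is of a different nature. It exploits that a square grid minor has an essentially unique planar embedding whose bounded faces are $4$-cycles; a $4$-cycle of depth-$\lambda$ bags cannot enclose a $(2\lambda+1)$-long diagonal segment, so every segment (and the outer face of $G$) must lie in the outer face of the minor. This gives, for each shallow minor $H$, a tree $T_H$ in the planar dual connecting all segments to the outer face and disjoint from the contracted edges. Deleting the edges dual to $T_H$ produces a graph whose tree decomposition (built by partitioning $G$ into subgrids centered on the segments and observing that their components under the deletion form a tree) has width $O(n\sqrt{\lambda/m})$; since $H$ is a minor of this graph, its side length is bounded likewise. The key missing idea in your proposal is this planarity/outer-face argument and the passage to treewidth via the dual tree; pathwidth of slabs will not suffice.
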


\begin{figure}[t]
\centering\includegraphics[height=2in]{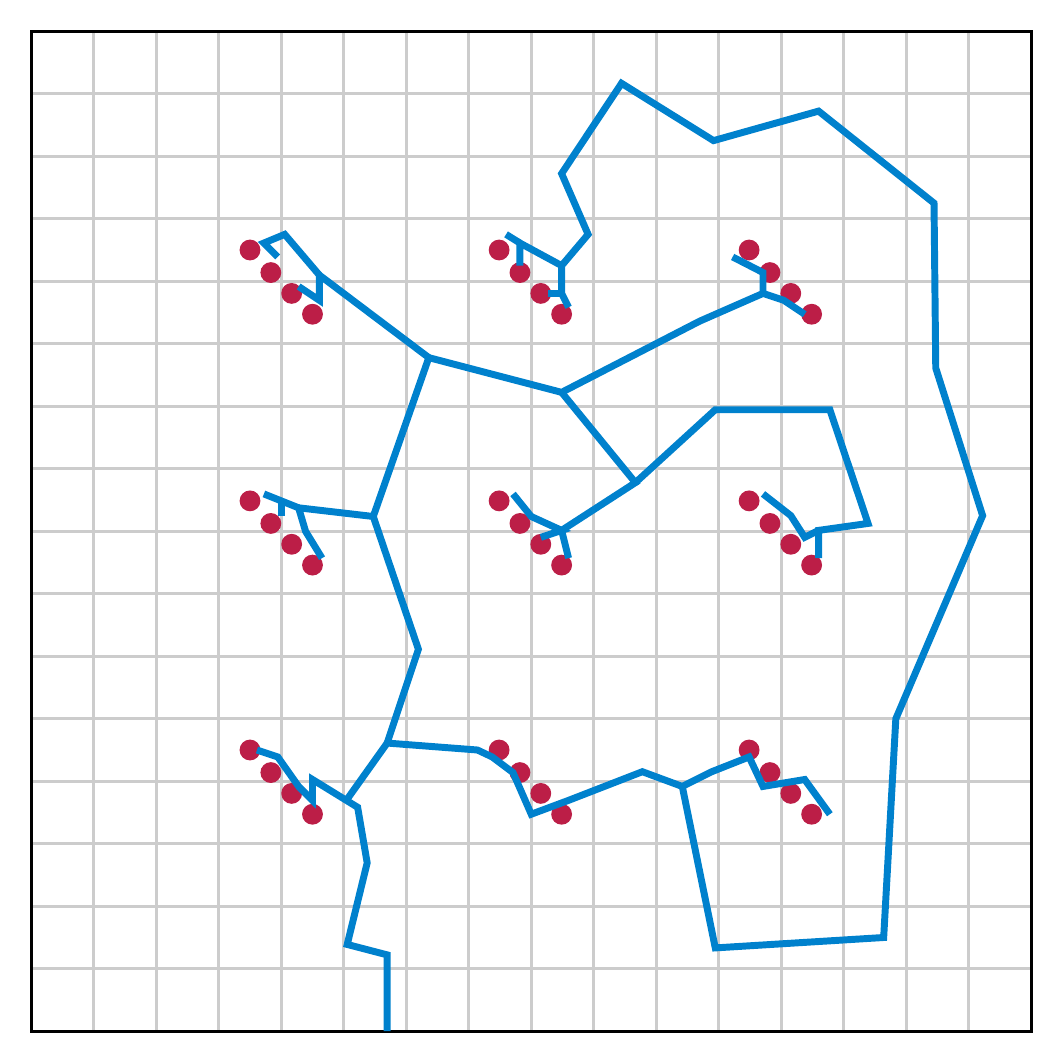}\qquad
\includegraphics[height=2in]{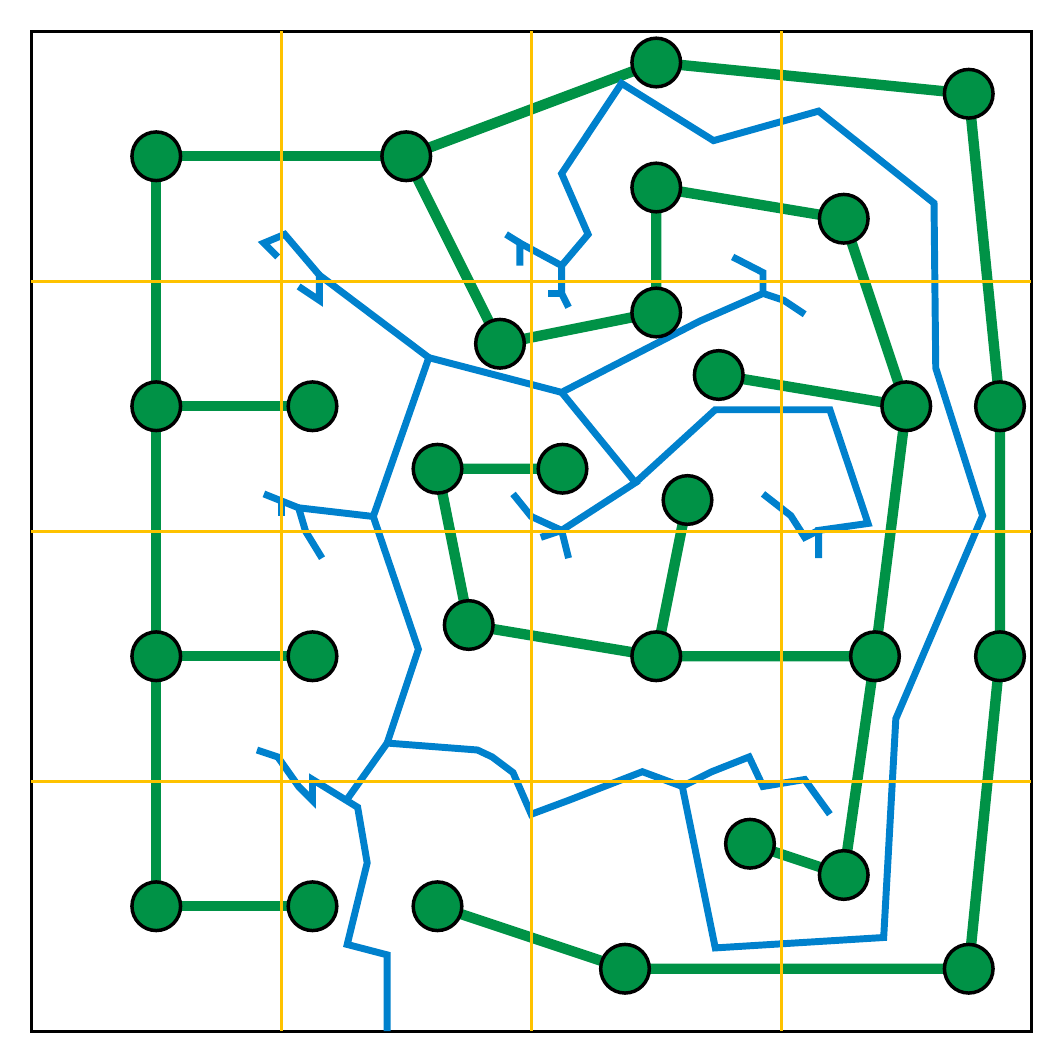}
\caption{Schematic view of upper bound construction for shallow minors. Left: The damaged vertices form a collection of diagonal segments of length $2\lambda+1$, evenly spaced throughout the grid (red). The tree $T_H$ in the dual graph, connecting faces near these segments and avoiding shallow minor $H$, is shown in blue. Right: Partitioning the grid into subgrids with corners at the centers of the damaged segments (yellow), and the tree $K$ representing the connected components of subgrids after $T_H$ is deleted (green).}
\label{fig:shallow-minor-lb}
\end{figure}

\begin{proof}
We assume $\lambda<n/2$, for otherwise every minor is a shallow minor and the result follows from Theorem~\ref{thm:grid-minor}.
To prove the existence of a large shallow grid minor we consider the following three cases.
\begin{itemize}
\item If $m\le\min(n/2,2\lambda)$, then the grid minor formed by the undamaged rows and columns of the grid has size $(m-n)\times (m-n)$, which is at least $n/2\times n/2$. The distance between two neighboring intersections of an undamaged row with an undamaged column is at most $2\lambda$, so the depth of the minor is at most $\lambda$.
\item If $m>\min(n/2,2\lambda)$ and $\lambda\le cn^2/m$ for a sufficiently small constant $c$, we may partition the grid into at least $m/2\lambda+1$ subgrids of side length $\Theta(n\sqrt{\lambda/m})$ (with a constant factor in the $\Theta$ notation that does not depend on $c$). One of these subgrids has at most $2\lambda$ damaged vertices in it, and its undamaged rows and columns form a shallow grid minor of depth at most $\lambda$ as in the first case. The assumption that $\lambda\le cn^2/m$ implies that $\lambda\le n\sqrt{c\lambda/m}$. Therefore, if $c$ is smaller than half the constant in the $\Theta$ notation, when we subtract the $2\lambda$ damaged rows and columns from the $\Theta(n\sqrt{\lambda/m})$ side length of the subgrid, the number of remaining undamaged rows and columns will still be $\Omega(n\sqrt{\lambda/m})$.
\item In the remaining case, $\lambda>cn^2/m$. We may apply the same method as in the proof of Theorem~\ref{thm:grid-minor}, which partitions the input grid into
smaller grids of side length $\Theta(n^2/m)$, one of which will have few enough damaged points that its remaining undamaged rows and columns will form a grid minor of side length $\Omega(n^2/m)$. By decreasing the size of these smaller grids by a constant factor (proportional to $1/c$), if necessary, we may ensure that their side length is at most $2\lambda$,
so that the resulting grid minor will automatically be shallow enough.
\end{itemize}

It remains to show that these bounds are tight, by exhibiting a set $D$ that blocks the existence of large shallow grid minors. This is trivial ($D$ can be empty) when $\min\{n,n\sqrt{{\lambda}/{m}},{n^2}/{m}\}=n$, and the existence of a suitable $D$ follows from Theorem~\ref{thm:grid-minor-lb} when the minimum is $n^2/m$. The remaining case, when the minimum is $n\sqrt{{\lambda}/{m}}$, combines ideas of planar embedding, treewidth, and interdigitating trees. Choose $D$ to be the vertices in at most $\lfloor m/(2\lambda+1)\rfloor$ diagonal line segments, each containing $2\lambda+1$ vertices, with these segments evenly spaced across the grid~$G$ as shown in Figure~\ref{fig:shallow-minor-lb}(left). Any minor of $G$ inherits a planar embedding from the embedding of $G$, but a square grid minor has a unique planar embedding up to the choice of its outer face, in which all faces other than the outer face are 4-cycles. None of these 4-cycles can completely surround one of the diagonal line segments in $D$ without exceeding the depth constraint, so all of the diagonal line segments of $D$ must lie within the outer face of any shallow grid minor. By similar reasoning the outer face of $G$ itself must also lie within the outer face of any shallow grid minor. It follows that we can find a connected tree $T_H$ in the dual graph of $G$, spanning the outer face of $G$ and all of the faces of $G$ that are adjacent to vertices of $D$, with the property that $T_H$ is disjoint from the edges of $G$ that have both endpoints in sets $V_i$. Each shallow grid minor $H$ of $G\setminus D$ is also a minor of the graph $G\setminus T_H$ formed by deleting the edges of $G$ that are dual to edges in $T_H$. The tree $T_H$ is illustrated in Figure~\ref{fig:shallow-minor-lb}(left).

No matter how such a tree $T_H$ is formed, the resulting graph $G\setminus T_H$ will have treewidth $O(n\sqrt{\lambda/m})$. For, consider the partition of $G$ into subgrids with corners at the centers of the diagonal damaged segments, and form a graph $K$ in which the vertices are connected components of the intersection of subgrids with $G\setminus T_H$ and the edges are adjacent pairs of components (Figure~\ref{fig:shallow-minor-lb}(right)). $K$ is a tree (it cannot contain a cycle, for such a cycle would surround one of the damaged diagonal segments preventing it from being attached to the rest of $T_H$), and hence has treewidth one. A tree-decomposition of $G\setminus T_H$ of width $O(n\sqrt{\lambda/m})$ may be obtained by using the decompositions of the subgrid components to expand the tree-decomposition of $K$.

We have seen that any shallow square grid minor $H$ of $G$ is a minor of $G\setminus T_H$, and that the treewidth of $G\setminus T_H$ and of its grid minors is $O(n\sqrt{\lambda/m})$. But the tree\-width of a grid is its side length~\cite{Bod-TCS-98}, so the largest shallow square grid minor in $G\setminus T_H$ must have side length $O(n\sqrt{\lambda/m})$.
\end{proof}

We remark that it does not seem possible to simplify the lower bound construction of this theorem by using pathwidth instead of treewidth, as we did for non-shallow minors. Indeed, for the set $D$ of vertices shown in Figure~\ref{fig:shallow-minor-lb}(left), there exist shallow minors of the remaining graph that do not surround any vertex of $D$ and whose pathwidth is $\Omega(n\log n\sqrt{\lambda/m})$, too high to give the bounds we need.

\section{Grids of bounded dimension}

Cubical grids of dimension higher than two (i.e. Cartesian products of equal-length paths) are less central than planar grids to the theory of graph minors, but there has been some past study of properties of these graphs related to minors~\cite{ChaKav-DM-06,ChaSiv-DM-07,OtaSud-DM-11}. As we show in this section, many of our results for planar grids extend directly to grids of higher dimension. Throughout this section, when we use $O$-notation, we treat the dimension $d$ as a fixed constant (suppressing the dependence on $D$ of the constant factors in the $O$-notation).

\begin{theorem}
Let $G$ be a $d$-dimensional cubical grid of side length $n$ with a set $D$ of $m$ damaged vertices. Then $G$ has a $d$-dimensional cubical grid subgraph of side length $\Omega(n/m^{1/d})$ disjoint from $D$. There exist sets $D$ with $|D|=m$ for which every $d$-dimensional cubical grid subgraph disjoint from $D$ has side length $O(n/m^{1/d})$.
\end{theorem}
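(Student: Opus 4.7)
The plan is to mimic the two-dimensional argument of Theorem~\ref{thm:subgraph} and its matching upper bound, replacing the factor $\sqrt{m+1}$ by $m^{1/d}$ throughout. Both directions reduce to regular partitions of the ambient grid into a cubical grid of smaller cubical subgrids.

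For the lower bound, I will set $k = \bigl\lfloor n/\lceil (m+1)^{1/d}\rceil\bigr\rfloor$ and partition $G$ into $\lceil (m+1)^{1/d}\rceil^d > m$ axis-aligned subgrids of side length $k$ (possibly discarding a boundary slab of width less than $k$ along each coordinate axis). Since the number of subgrids strictly exceeds the number of damaged vertices in $D$, pigeonhole guarantees a subgrid disjoint from $D$. This subgrid is the desired $d$-dimensional cubical grid subgraph, and its side length is $\Theta(n/m^{1/d})$ provided $m^{1/d} \le n$ (and if $m^{1/d} > n$ the claimed bound is $O(1)$, which is trivially achievable by any single vertex).

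For the matching upper bound, I will place the damage set $D$ on a regular sublattice so that no large axis-aligned cubical subgrid can avoid it. Specifically, fix $k$ and let $D$ be the set of vertices all of whose coordinates are congruent to $-1$ modulo $k+1$, in a coordinate system where one corner of the cube is at the origin. Then $|D| = \lfloor n/(k+1)\rfloor^d$, and any axis-aligned cubical subgrid of side length $k+1$ (or larger) contains at least one vertex of $D$, while the largest undamaged axis-aligned cubical subgrid has side length exactly $k$. Choosing $k$ to be the largest integer with $\lfloor n/(k+1)\rfloor^d \le m$ yields $k = \Theta(n/m^{1/d})$, matching the lower bound.

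There is essentially no serious obstacle: the argument is a direct dimensional generalization, and the only care needed is in the bookkeeping for the floor and ceiling functions and in checking that the construction works also for cubical subgrids that are not axis-aligned copies of the partition cells. The latter is standard, since any $d$-dimensional cubical grid subgraph of $G$ is necessarily an axis-aligned translate in the ambient lattice (the only isometric embeddings of a path of length $\ell$ in the lattice are axis-parallel), so no additional rotational freedom needs to be ruled out.
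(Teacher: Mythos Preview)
Your approach is identical to the paper's: pigeonhole over a partition into more than $m$ congruent subcubes for the lower bound, and place $D$ on the sublattice of points with all coordinates $\equiv -1 \pmod{k+1}$ for the upper bound. One correction to your closing remark: it is \emph{not} true that the only isometric (let alone subgraph) embeddings of a path into $\mathbb{Z}^d$ are axis-parallel---paths can zigzag freely---so that sentence does not justify rigidity. The actual reason a $k^d$ cubical grid subgraph of the $n^d$ grid (for $k\ge 3$) must sit axis-aligned is a degree argument: every interior vertex of the small grid has degree $2d$, hence its image must use all $2d$ ambient edges, and the unit $4$-cycles then propagate this local rigidity globally. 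The paper's proof does not spell this point out either, so your argument is at least as complete as the original.
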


\begin{proof}
As in Theorem~\ref{thm:subgraph}, the lower bound partitions the grid into more than $m$ subgrids of the given side length, at least one of which must be undamaged. The upper bound places the vertices of $D$ on points with Cartesian coordinates that are all $-1$ modulo $k+1$, for $k$ chosen to make $|D|\le m$.
\end{proof}

\begin{theorem}
\label{thm:bounded-dim-minor}
Let $G$ be a $d$-dimensional cubical grid of side length $n$ with a set $D$ of $m$ damaged vertices. Then $G$ has a $d$-dimensional cubical grid minor of side length $\Omega(\min\{n,(n^d/m)^{1/(d-1)}\})$ disjoint from $D$. There exist sets $D$ with $|D|=m$ for which every $d$-dimensional cubical grid minor disjoint from $D$ has side length $O(n^d/m)$.
\end{theorem}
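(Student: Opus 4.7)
The plan follows the template of Theorem~\ref{thm:grid-minor}, lifted to $d$~dimensions. For the lower bound I will (i)~establish an ``$n-m$'' style bound using straight axis-parallel paths and (ii)~combine it with a partition-and-average argument; for the upper bound I will use a hyperplane-damage construction.

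For step~(i), let $\pi_i$ denote projection onto the $i$th coordinate, set $V_i=\{0,\dots,n-1\}\setminus\pi_i(D)$, and note that $|V_i|\ge n-|\pi_i(D)|\ge n-m$. Take as minor-vertices the points of the sub-lattice $V_1\times\cdots\times V_d$, and for each pair of sub-lattice points differing only in coordinate~$i$ realise the corresponding minor-edge by the straight grid segment in direction~$i$ joining them. Because no damage shares any coordinate with a value in any $V_j$, every axis-parallel line through the sub-lattice is entirely undamaged; segments in different directions meet only at sub-lattice points, so they form a valid family of contraction paths---the $d$-dimensional analogue of Lemma~\ref{lem:disjoint-paths}. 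Trimming each $V_i$ to its smallest $n-m$ elements yields a cubical grid minor of side at least $n-m$.

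For step~(ii), partition $G$ into $(n/s)^d$ congruent sub-cubes of side~$s$. By averaging, some sub-cube contains at most $ms^d/n^d$ damaged vertices, within which step~(i) produces a cubical minor of side at least $s-ms^d/n^d$. Choosing $s=\Theta((n^d/m)^{1/(d-1)})$ makes the damage at most $s/2$ and the resulting minor of side $\Omega((n^d/m)^{1/(d-1)})$; when $m<n/2$ this $s$ exceeds~$n$ and the partition is vacuous, but then step~(i) applied to the whole grid already gives a minor of side $n-m=\Omega(n)$. Together these cover all regimes and prove the $\Omega(\min\{n,(n^d/m)^{1/(d-1)}\})$ lower bound.

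For the upper bound, choose $D$ to consist of $\lfloor m/n^{d-1}\rfloor$ complete coordinate hyperplanes perpendicular to a fixed axis, spaced uniformly across $G$. Each hyperplane contains $n^{d-1}$ vertices, so $|D|\le m$; their removal disconnects $G$ into slabs of thickness $O(n^d/m)$ in the distinguished direction, and any connected minor, in particular any cubical grid minor, must lie inside a single slab, giving the $O(n^d/m)$ upper bound. The main shortcoming here---and the reason the theorem does not claim a tight bound---is that the lower and upper bounds diverge by a polynomial factor when $d>2$; strengthening the upper-bound construction appears to require a higher-dimensional analogue of the pathwidth/diagonal construction of Theorem~\ref{thm:grid-minor-lb}, which I do not attempt.
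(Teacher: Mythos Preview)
Your lower-bound argument is correct and matches the paper's approach exactly: the sets $V_i=\{0,\dots,n-1\}\setminus\pi_i(D)$ are precisely the coordinates of the undamaged axis-parallel $(d-1)$-hyperplanes the paper refers to, and your partition-and-average step is the same as the paper's.

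Your upper-bound construction, however, has a genuine gap. By damaging hyperplanes perpendicular to a \emph{single} axis you leave slabs of shape $t\times n\times\cdots\times n$ with $t=O(n^d/m)$, and you then assert that any cubical grid minor inside such a slab has side at most $t$. You do not justify this, and for $d\ge 3$ no standard minor-monotone invariant gives it: the treewidth of a $t\times n^{d-1}$ slab is $\Theta(t\,n^{d-2})$, far larger than $t$, and a cubical grid of side $k$ has treewidth $\Theta(k^{d-1})$, so treewidth alone only forces $k\le O((t\,n^{d-2})^{1/(d-1)})$, which can greatly exceed $t$ when $n\gg t$. (For $d=2$ your argument happens to work, precisely because slab treewidth equals $t$ there.) The paper avoids this issue by placing hyperplanes in \emph{all} $d$ coordinate directions, so that the surviving components are themselves cubical subgrids of side $s=O(n^d/m)$; a cubical grid minor of side $k$ then satisfies $k^d\le s^d$ by a straight vertex count, hence $k\le s$. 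Switching to that construction (at the cost of a harmless factor $d$ in the spacing) repairs your proof immediately.
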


\begin{proof}
For the lower bound, a grid with $m<n$ has a grid minor of side length $n-m$, formed by the intersection pattern of its undamaged $(d-1)$-dimensional axis-parallel hyperplanes. The result follows by partitioning the grid into subgrids whose average number of damaged vertices is proportional to their side length, and selecting a subgrid whose number of damaged vertices is at most average.

The upper bound places the vertices of $D$ on evenly spaced $(d-1)$-dimensional axis-parallel hyperplanes within the grid, partitioning it into subgrids of the given size that are disconnected from each other.
\end{proof}

Unlike the two-dimensional case, the upper and lower bounds of Theorem~\ref{thm:bounded-dim-minor} do not match, essentially because in higher dimensions the side length of a cube is no longer proportional to its surface measure.

\begin{theorem}
\label{thm:bounded-dim-shallow}
Let $G$ be a $d$-dimensional cubical grid of side length $n$ with a set $D$ of $m$ damaged vertices, and let $\lambda\ge 1$ be given. Then $G$ has a $d$-dimensional cubical grid shallow minor of depth $\lambda$ and side length
$$\Omega(\min\{n,n(\lambda/m)^{1/d},(n^d/m)^{1/(d-1)}\})$$
 disjoint from $D$.
\end{theorem}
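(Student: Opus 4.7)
The plan is to parallel the three-case structure in the proof of Theorem~\ref{thm:shallow}, extended to $d$ dimensions. As before, I assume $\lambda < n/2$, since otherwise every minor is already shallow at depth $\lambda$ and the bound reduces to Theorem~\ref{thm:bounded-dim-minor}. The three terms of the $\min$ correspond to three regimes of $(m,\lambda)$: a light-damage regime giving a minor of side length $\Omega(n)$ from undamaged axis-parallel $(d-1)$-hyperplanes; an intermediate regime handled by subpartitioning; and a regime in which $\lambda$ is so large that the non-shallow construction of Theorem~\ref{thm:bounded-dim-minor} automatically produces a sufficiently shallow minor.

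In the light regime $m \le \min(n/2,2\lambda)$, I select the $\Omega(n)$ fully undamaged hyperplanes in each of the $d$ axis directions. Consecutive such hyperplanes are separated by at most $m+1\le 2\lambda+1$ positions, and any line formed by intersecting $d-1$ fully undamaged hyperplanes is itself entirely undamaged. Cutting each such line at midpoints between consecutive $d$-fold intersections produces a disjoint collection of ``plus''-shaped bags, one per intersection, each of induced-subgraph radius $O(d\lambda)=O(\lambda)$, yielding a shallow grid minor of side length $\Omega(n)$ at depth~$\lambda$.

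In the intermediate regime $m > \min(n/2,2\lambda)$ with $\lambda \le c(n^d/m)^{1/(d-1)}$ for a sufficiently small constant $c$, I partition $G$ into $\Theta(m/\lambda)$ equal axis-aligned subgrids, each of side length $\Theta(n(\lambda/m)^{1/d})$. Since the average damage per subgrid is $\Theta(\lambda)$, some subgrid has $O(\lambda)$ damaged vertices, and the choice of $c$ ensures $2\lambda$ is at most a constant fraction of its side length; applying the light-regime construction inside this subgrid then gives a shallow minor of side length $\Omega(n(\lambda/m)^{1/d})$. In the final regime $\lambda > c(n^d/m)^{1/(d-1)}$, the proof of Theorem~\ref{thm:bounded-dim-minor} already partitions $G$ into subgrids of side length $\Theta((n^d/m)^{1/(d-1)})$ and extracts a minor of that size; because this side length is $O(\lambda)$, its bags automatically have radius $O(\lambda)$.

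The main obstacle is the bag construction used in the light regime: one needs connected bags of undamaged vertices whose induced-subgraph radius, not just ambient grid distance, is $O(\lambda)$, and the edges of the minor between adjacent bags must be realized by undamaged edges of $G$. The observation that makes this work is that any line lying in the intersection of $d-1$ fully undamaged hyperplanes is itself undamaged, so restricting attention to this $(d-1)$-fold intersection skeleton keeps everything within undamaged territory and reduces the depth analysis to the one-dimensional question of how far apart consecutive undamaged hyperplanes can lie. Once this is in place, the remaining verification that the three cases exhaust $\lambda<n/2$ and that the claimed bound matches the $\min$ in each case is a routine calculation.
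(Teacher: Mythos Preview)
Your proposal is correct and follows essentially the same approach as the paper, which simply says to partition into subgrids so that one has at most $2\lambda$ damaged vertices and then to use the undamaged axis-parallel hyperplanes ``as in Theorem~\ref{thm:shallow}''; you have faithfully unpacked that one-line sketch into the same three-case analysis. One minor sharpening: in your plus-shaped bags the centre intersection vertex is already at induced distance at most $\lambda$ from every arm vertex, so the radius is $\le\lambda$ outright rather than $O(d\lambda)$, which removes any need to fiddle with constants to meet the stated depth bound.
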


\begin{proof}
Partition the grid into subgrids within one of which there must be at most 2$\lambda$ damaged vertices, and then find a minor using the remaining undamaged axis-parallel hyperplanes, as in Theorem~\ref{thm:shallow}.
\end{proof}

Theorem~\ref{thm:bounded-dim-shallow} generalizes the lower bound of Theorem~\ref{thm:shallow} to higher dimensions, but, we do not know how to generalize the corresponding upper bound.

\section{Hypercube subgraphs}

For grids of unbounded dimension, our knowledge is even more limited, but we can still prove some results. We define a \emph{hypercube graph} to be a $d$-dimensional cubical grid graph of side length $n=2$, and we let $N=2^d$ denote the number of vertices in such a graph. This graph may equivalently be described as the having as its vertex set the set of all length-$d$ binary strings, with edges that connect each two strings that differ in a single binary digit. In this setting it does not make sense to look for subgraphs or minors that are grid graphs of the same dimension but smaller size, as there is no nontrivial smaller size to look for; instead, we seek subgraphs or minors that are themselves hypercubes of large (but smaller) dimension. If the vertices of a $d$-dimensional hypercube are represented as length-$d$ sequences of the symbols $\{0,1\}$, then its hypercube subgraphs may similarly be represented as sequences of the symbols $\{0,1,\ast\}$, where the vertices in such a subgraph are obtained by replacing each $\ast$ symbol by either $0$ or $1$. The dimension of the hypercube subgraph is then its number of $\ast$ symbols.

There is a formal resemblance between the sets $D$ of vertices whose removal leaves no remaining large  hypercube  subgraph, and the sets $E$ of binary strings that form a good \emph{erasure code}~\cite{Riz-SIGCOMM-97,LubMitSho-ITIT-01,Lub-FOCS-02}. An erasure code capable of handling $e$ erasures can be described graph-theoretically as a subset $E$ of hypercube vertices such that every hypercube subgraph of dimension at most~$e$ contains at most one member of $E$. Instead, in our problem, there can be no undamaged hypercube of dimension~$d$ if the set $D$ of damaged vertices has the property that every hypercube subgraph of dimension at least~$d$ contains at least one member of $D$.

The smallest set $D$ whose removal eliminates all hypercube subgraphs of dimension $d-1$ from a $d$-dimensional hypercube has $|D|=2$: simply remove two opposite vertices. (In coding theory terms, this is a \emph{repetition code}.) However, to eliminate all hypercube subgraphs of dimension $d-2$, a set $D$ of non-constant size is needed.

\begin{lemma}
\label{lem:codim2-example}
For every even $m$, there exists a set $D$ of $m$ vertices within a hypercube of dimension
$d=\tbinom{m}{m/2}/2$ whose deletion eliminates all $(d-2)$-dimensional hypercube subgraphs.
\end{lemma}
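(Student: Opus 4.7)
The plan is to dualize: think of the $m$ damaged vertices as labels $1,\ldots,m$ for the columns of a $d\times m$ binary matrix, and identify each coordinate $i\in[d]$ with the subset $S_i\subseteq[m]$ of column indices whose $i$th entry equals $1$. A $(d-2)$-dimensional hypercube subgraph is specified by fixing two coordinates $i,j$ to particular values $(a,b)\in\{0,1\}^2$. Such a subcube contains a damaged vertex iff some $k\in[m]$ satisfies $(1_{k\in S_i},1_{k\in S_j})=(a,b)$, so the requirement that $D$ hits every $(d-2)$-subcube becomes: for every pair $i\neq j$ and every $(a,b)$, some $k$ realizes it. Equivalently, the four sets $S_i\cap S_j$, $S_i\setminus S_j$, $S_j\setminus S_i$, and $[m]\setminus(S_i\cup S_j)$ must all be nonempty for every $i\neq j$.

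My construction takes each $S_i$ to be a subset of $[m]$ of size exactly $m/2$. For distinct size-$m/2$ subsets, the antichain conditions $S_i\setminus S_j\neq\emptyset$ and $S_j\setminus S_i\neq\emptyset$ are automatic. For the other two conditions, observe that two $m/2$-subsets have empty intersection iff they are complementary iff their union is all of $[m]$; so both remaining conditions hold precisely when no two sets in the family are complements. The $\binom{m}{m/2}$ size-$m/2$ subsets of $[m]$ partition into $\binom{m}{m/2}/2$ complementary pairs, and choosing one representative from each pair gives a family of exactly $d=\binom{m}{m/2}/2$ subsets satisfying all four conditions simultaneously.

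Finally I would verify that the resulting column vectors $v_k\in\{0,1\}^d$ defined by $v_k(i)=1_{k\in S_i}$ are pairwise distinct, so that $D=\{v_1,\ldots,v_m\}$ genuinely has cardinality $m$: for any $k\neq k'$, some $m/2$-subset of $[m]$ contains $k$ but not $k'$, and either this subset or its complement lies among the chosen $S_i$, separating $v_k$ from $v_{k'}$ on the corresponding coordinate. The proof is essentially transparent once the dualization is in place, and I do not expect a significant obstacle; the only conceptually delicate step is recognizing that the Sperner level together with the complementary-pair restriction is precisely what simultaneously guarantees pairwise intersection and pairwise non-covering, so that the stated dimension $d=\binom{m}{m/2}/2$ emerges as the natural output of the construction rather than as an arbitrary target.
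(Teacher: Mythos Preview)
Your proposal is correct and is essentially the same construction as the paper's: the paper also chooses, for each of the $\tbinom{m}{m/2}/2$ unordered partitions of $[m]$ into two equal halves, one side $D_i$ (your $S_i$), and defines the $j$th damaged point by $p_j(i)=1_{j\in D_i}$. Your argument is slightly more explicit in checking all four quadrants $(a,b)\in\{0,1\}^2$ and in verifying that the $m$ column vectors are pairwise distinct, but the underlying idea and construction are identical.
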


\begin{proof}
There are exactly $d$ ways of partitioning the set $\{1,2,\dots,m\}$ into two subsets of equal size (with two partitions considered equivalent if they have the same two subsets in either order). Let the $i$th of these partitions be given by the two sets $D_i$ and $D\setminus D_i$, choosing $D_i$ arbitrarily among the two sets that define the partition. Let the $i$th coordinate of point $p_j$ in $D$ ($j=1,2,\dots m$) be $1$ if $j\in D_i$, and let the coordinate be $0$ otherwise.

Then every hypercube subgraph of dimension $d-2$ within the $d$-dimensional hypercube may be described as a string of $d$ symbols from $\{0,1,\ast\}$ of which exactly two symbols are not~$\ast$. Let $X$ and $Y$ be the two sets of $m/2$ points of $D$ whose coordinate values match these two symbols. Then, since $X$ and $Y$ are distinct and non-complementary sets of $m/2$ points, they have a non-empty intersection, a point of $D$ intersecting the given $(d-2)$-dimensional hypercube subgraphs. Since this hypercube subgraph was chosen arbitrarily, $D$ intersects all $(d-2)$-dimensional hypercube subgraphs.
\end{proof}

For example, let $D$ be a set of six points in a ten-dimensional hypercube, represented by the binary matrix
$$\left[\begin{array}{cccccccccc}
1&1&1&1&1&1&1&1&1&1\\
1&1&1&1&0&0&0&0&0&0\\
1&0&0&0&1&1&1&0&0&0\\
0&1&0&0&1&0&0&1&1&0\\
0&0&1&0&0&1&0&1&0&1\\
0&0&0&1&0&0&1&0&1&1\\
\end{array}\right]$$
where each column has equally many 0s and 1s, and all columns and their complements are distinct.
Then removing $D$ from the hypercube causes the largest remaining hypercube subgraph to have dimension seven. In general, the set $D$ constructed in the proof of the lemma may be interpreted as a binary code with size $m$ and length $\tbinom{m}{m/2}/2$, whose minimum distance
$\tbinom{m-2}{m/2-1}$ is smaller than its length by a factor of $\tfrac{m}{2m-2}>\tfrac{1}{2}$, matching the Plotkin bound on the ratio of distance and length for any code of size $m$~\cite{Plo-IRE-60}.

The bound of Lemma~\ref{lem:codim2-example} is tight:

\begin{lemma}
\label{lem:Sperner}
Let $D$ be a set of $m$ points in a $d$-dimensional hypercube, and suppose that $d>\tbinom{m}{\lfloor m/2\rfloor}/2$. Then there exists a $(d-2)$-dimensional hypercube subgraph disjoint from~$D$.
\end{lemma}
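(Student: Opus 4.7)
The plan is to set up a dual encoding of $D$ by its coordinate columns, translate the obstruction condition into a family of set-containment constraints, and then apply Sperner's theorem (as the label of the lemma suggests). First I would observe that a $(d-2)$-dimensional subcube is specified by choosing an unordered pair of coordinates $\{i,j\}$ and a target $(a,b)\in\{0,1\}^2$ on those two coordinates. Such a subcube is disjoint from $D$ iff no point $p\in D$ satisfies $p_i=a$ and $p_j=b$. Equivalently, the set $D$ intersects every $(d-2)$-subcube iff for every pair $\{i,j\}$ the projection of $D$ onto coordinates $(i,j)$ covers all four patterns in $\{0,1\}^2$. I will prove the contrapositive: if $D$ meets every $(d-2)$-subcube, then $d\le\binom{m}{\lfloor m/2\rfloor}/2$.

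Encode $D$ by its columns: for $i=1,\dots,d$, let $A_i\subseteq\{1,\dots,m\}$ be the set of indices of points in $D$ whose $i$-th coordinate equals~$1$, and let $\overline{A_i}$ denote its complement in $\{1,\dots,m\}$. The four patterns being covered by the projection of $D$ on coordinates $(i,j)$ amounts to the four intersections $A_i\cap A_j$, $A_i\cap\overline{A_j}$, $\overline{A_i}\cap A_j$, $\overline{A_i}\cap\overline{A_j}$ all being nonempty. Rewriting these four nonemptiness conditions as non-inclusions gives, for every pair $i\neq j$,
\[
A_i\not\subseteq A_j,\qquad A_j\not\subseteq A_i,\qquad A_i\not\subseteq\overline{A_j},\qquad \overline{A_i}\not\subseteq A_j.
\]
Moreover, taking $j$ arbitrary and using $A_i\cap A_j\neq\emptyset$ together with $A_i\cap\overline{A_j}\neq\emptyset$ forces $A_i\notin\{\emptyset,[m]\}$, and symmetrically for $A_j$.

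The key step is to verify that the $2d$ sets $A_1,\overline{A_1},\dots,A_d,\overline{A_d}$ are all distinct and form an antichain in the Boolean lattice on $\{1,\dots,m\}$. I will check the four types of pairs: $(A_i,A_j)$, $(A_i,\overline{A_j})$, $(\overline{A_i},\overline{A_j})$ with $i\neq j$, and $(A_i,\overline{A_i})$. The first is immediate from the list above; the second follows because $A_i\not\subseteq\overline{A_j}$ directly, and $\overline{A_j}\subseteq A_i$ would be equivalent to $\overline{A_i}\subseteq A_j$, which is also forbidden; the third uses $A_i\not\subseteq A_j$ (and its flip) after complementing both sides; and the fourth uses $A_i\notin\{\emptyset,[m]\}$.

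Once this antichain is in hand, Sperner's theorem yields $2d\le\binom{m}{\lfloor m/2\rfloor}$, i.e.\ $d\le\binom{m}{\lfloor m/2\rfloor}/2$, completing the contrapositive. The only real obstacle is making the translation between ``projection covers all four patterns'' and the four set-inclusion statements above completely rigorous, and then mechanically running the four cases of the antichain check; beyond that the proof is a direct invocation of Sperner. Note also that this argument is tight against Lemma~\ref{lem:codim2-example}, since the construction there produces exactly a maximum antichain of sets of size $m/2$ paired with their complements.
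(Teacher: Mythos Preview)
Your proof is correct and is essentially the same argument as the paper's: the paper associates to each of the $2d$ codimension-$1$ subcubes its intersection with $D$---which is exactly your family $\{A_1,\overline{A_1},\dots,A_d,\overline{A_d}\}$ viewed as subsets of $D$---and applies Sperner's theorem to find two comparable sets $X\subseteq Y$, from which $h_X\cap\overline{h_Y}$ is the desired disjoint $(d-2)$-subcube. You have phrased it as the contrapositive (no disjoint subcube forces the $2d$ sets to be an antichain), but the construction and the invocation of Sperner are identical.
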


\begin{proof}
Let $\mathcal{H}$ be the family of $(d-1)$-dimensional hypercube subgraphs of the $d$-dimensional hypergraph, and let $\mathcal{F}$ be the multiset of $2d>\tbinom{m}{\lfloor m/2\rfloor}$ subsets of $D$ formed by intersecting each member of $\mathcal{H}$ with $D$.
By Sperner's theorem, there exist two sets $X$ and $Y$ in $\mathcal{F}$, corresponding to two distinct hypercube subgraphs $h_X$ and $h_Y$ in $\mathcal{F}$, such that $X\subseteq Y$.
Then the $(d-2)$-dimensional hypercube subgraph $h_x\cap \overline{h_y}$ (where $\overline h$ denotes the subgraph complementary to $h$) is disjoint from~$D$.
\end{proof}

\begin{theorem}
For every hypercube graph with $N$ vertices, and every set $D$ of $m$ damaged vertices, there is a hypercube subgraph disjoint from $D$ containing $\Omega(\min\{N,(N/m)\log\log(N/m)\})$ vertices.
\end{theorem}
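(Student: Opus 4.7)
The plan is to mirror the partition-plus-subresult strategy used for planar grids in Theorem~\ref{thm:grid-minor}, replacing Lemma~\ref{lem:disjoint-paths} by the Sperner-based Lemma~\ref{lem:Sperner}. Set $n = N/m$. The boundary case $n = O(1)$ is trivial: the claimed bound is then $O(1)$, and any single undamaged vertex (which exists whenever $m < N$) gives a $0$-dimensional hypercube subgraph realizing it. I therefore assume $n$ exceeds a suitable absolute constant.

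I would then split into two regimes. If $m \le \log_2 d$, then $\binom{m}{\lfloor m/2\rfloor}/2 \le 2^{m-1} < d$, so Lemma~\ref{lem:Sperner} applied directly to the whole hypercube produces an undamaged $(d-2)$-dimensional subcube with $N/4$ vertices; a short calculation shows $(N/m)\log\log(N/m) \ge N$ throughout this regime, so $\Omega(N)$ meets the target bound. In the main regime $m > \log_2 d$, I would choose an integer $\ell$ with $2^\ell = \Theta(n\log_2\ell)$---concretely, the unique $\ell$ satisfying $2^\ell \le n(\log_2\ell)/4 < 2^{\ell+1}$---and partition the hypercube into $2^{d-\ell}$ subcubes of dimension $\ell$ by fixing the first $d-\ell$ coordinates in all possible ways. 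The side condition $\ell \le d$ reduces to $\log_2\log_2\log_2 n \le \log_2 m$, which follows from $m > \log_2 d \ge \log_2\log_2 n$. By pigeonhole, some subcube $C$ contains at most $m' := \lceil 2^\ell/n\rceil = O(\log_2\ell)$ damaged vertices; then $\binom{m'}{\lfloor m'/2\rfloor} \le 2^{m'} = O(\sqrt{\ell}) \ll \ell$, so Lemma~\ref{lem:Sperner} applied to $C$ yields an undamaged hypercube subgraph of dimension $\ell - 2$ inside $C$, containing $2^{\ell-2} = \Omega(n\log_2\log_2 n) = \Omega((N/m)\log\log(N/m))$ vertices.

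The main obstacle is really only arithmetic bookkeeping: choosing $\ell$ with enough slack that both the pigeonhole count and Sperner's hypothesis fire with definite constants, especially in the boundary region where $m$ is comparable to $\log_2 d$ or $n$ is only moderately large, and verifying that the two asymptotic regimes $\min\{N, (N/m)\log\log(N/m)\} = N$ and $= (N/m)\log\log(N/m)$ splice together seamlessly at $m \approx \log_2 d$. No conceptual ingredient beyond Lemma~\ref{lem:Sperner} appears to be required.
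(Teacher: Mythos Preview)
Your proposal is correct and follows essentially the same route as the paper: handle small $m$ by a direct application of Lemma~\ref{lem:Sperner}, and otherwise partition the hypercube into subcubes sized so that pigeonhole leaves one with roughly $\log\log(N/m)$ damaged vertices, then invoke Lemma~\ref{lem:Sperner} inside that subcube. The only difference is packaging: the paper parameterizes by the target damage-per-subcube $q\approx\log\log(N/m)$ and derives the subcube count $p$, whereas you parameterize by the subcube dimension $\ell$ via $2^\ell=\Theta(n\log_2\ell)$; these are equivalent choices, and your explicit slack factor $1/4$ makes the Sperner hypothesis go through with room to spare (indeed $2^{m'}=O(\ell^{1/4})$, even better than the $O(\sqrt{\ell})$ you state).
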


\begin{proof}
If $m\le\log\log N$,
then $\tbinom{m}{\lfloor m/2\rfloor}<2^m\le\log n$ and we can apply Lemma~\ref{lem:Sperner} to find a hypercube subgraph disjoint from $D$ containing at least $N/4$ vertices.
Otherwise, 
choose a number $q\approx\log\log(N/m)$
such that $2^{2^q}<qN/2m$.
Let $p$ be the smallest power of two greater than $m/q$, partition the hypercube into $p$ disjoint smaller hypercubes, and choose one in which the number of damaged vertices is at most $q$.
Then there are at least $qN/2m$ vertices in this hypercube, allowing us to apply Lemma~\ref{lem:Sperner} within it to find a hypercube subgraph disjoint from $D$ containing $N/p=\Omega((N/m)\log\log(N/m))$ vertices.
\end{proof}

\begin{theorem}
\label{thm:hc-sg-ub}
For every hypercube graph with $N$ vertices, and every $m<D$, there exists a set $D$ with $|D|=m$ such that the number of vertices in the largest hypercube subgraph disjoint from $D$ is
$$O\left(\min\left\{
\frac{N\log N}{m},
\frac{N\log\tfrac{m}{\log\log N}\log\log N}{m}
\right\}\right).$$
\end{theorem}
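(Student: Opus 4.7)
The proof establishes the two terms inside the minimum by separate constructions.

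For the $\tfrac{N\log N}{m}$ bound, I would use a probabilistic argument. Choose $D$ uniformly at random among size-$m$ subsets of the $N$ hypercube vertices. Any fixed hypercube subgraph of dimension $k$ (containing $2^k$ vertices) avoids $D$ with probability at most $\binom{N-2^k}{m}/\binom{N}{m}\le\exp(-m2^k/N)$. Setting $2^k$ just above $cN\log N/m$ for a constant $c>2$, this probability is at most $N^{-c}$, while the total number of dim-$k$ subgraphs is $\binom{d}{k}2^{d-k}\le 2^{2d}=N^2$. A union bound shows that with positive probability $D$ blocks every dim-$k$ subgraph, so some $D$ of size $m$ exists whose largest remaining hypercube subgraph has $2^k=O(N\log N/m)$ vertices.

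For the $\tfrac{N\log(m/\log\log N)\log\log N}{m}$ bound, I would combine partitioning with Lemma~\ref{lem:codim2-example}. Let $m_0=\Theta(\log\log N)$ be the smallest even integer with $\binom{m_0}{m_0/2}/2$ at least the ambient sub-hypercube dimension, and let $p\approx m/m_0$. Partition the hypercube along $\lceil\log p\rceil$ coordinates into $p$ sub-hypercubes of dimension $d'=d-\lceil\log p\rceil$, and within each install the damage set from Lemma~\ref{lem:codim2-example} using $m_0$ vertices; the total damage is at most $m$. Lemma~\ref{lem:codim2-example} then blocks every hypercube subgraph of dimension at least $d'-2$ that is contained in a single sub-hypercube, so in-partition undamaged subgraphs have at most $2^{d'-1}=N/(2p)=\Theta(N\log\log N/m)$ vertices. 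For a subgraph of dimension $j$ whose varying-coordinate set contains $t$ of the partition coordinates, its intersection with each touched sub-hypercube has dimension $j-t$, so Lemma~\ref{lem:codim2-example} blocks it as soon as $j-t\ge d'-2$. The factor $\log(m/\log\log N)=\lceil\log p\rceil$ in the stated bound absorbs the spanning subgraphs whose $t$ is close to $\log p$, once one verifies using the specific structure of the Plotkin-bound-tight code underlying Lemma~\ref{lem:codim2-example} that the most extreme spanning configurations are still blocked.

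The main obstacle is precisely this spanning-subgraph analysis. Lemma~\ref{lem:codim2-example} is tailored to codimension-2 subgraphs inside a single hypercube and does not, as a black box, control subgraphs whose intersection with each sub-hypercube has codimension larger than two. Establishing the bound for spanning subgraphs requires opening up the Lemma's construction and using the fact that its $m_0$ damaged vertices arise from a balanced binary code whose dual structure also hits many codim-$3$ and higher-codim subgraphs of the sub-hypercube; the remaining probabilistic and partitioning ingredients are routine by comparison.
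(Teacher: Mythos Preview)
Your probabilistic argument for the $N\log N/m$ term is correct and is essentially what the paper does for that term.

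The genuine gap is in your second construction. The partition-and-plant idea using Lemma~\ref{lem:codim2-example} cannot be completed in the way you sketch, and your own final paragraph already identifies why: a subcube that varies in $t$ of the $\lceil\log p\rceil$ partition coordinates meets each block in a subcube of codimension larger than two whenever the within-block dimension $j$ satisfies $j<d'-2$, and Lemma~\ref{lem:codim2-example} says nothing about such intersections. Your suggestion that ``opening up'' the Plotkin-tight code fixes this is not correct. That code has only $m_0=\Theta(\log\log N)$ points inside a block of dimension $d'=\Theta(\log N)$; concretely, for three block-coordinates $i_1,i_2,i_3$ one can choose values $b_1,b_2,b_3$ so that the triple intersection of the corresponding half-sets of $\{1,\dots,m_0\}$ is empty, which exhibits an entire codimension-$3$ subcube of the block missed by the planted set. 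Since there are subcubes of size exceeding your target whose within-block dimension is $d'-3$ or less (take $t$ close to $\log p$), your $D$ need not hit them, and no ``dual structure'' of the code prevents this.

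The paper's proof avoids all of this: both terms in the minimum come from the \emph{same} random $D$, and the second term is obtained not by a new construction but by a sharper union bound. A subcube with at least $N\log\log N/m$ vertices has at most $\lceil\log(m/\log\log N)\rceil$ fixed coordinates, so the number of such subcubes is at most $(3\log N)^{\lceil\log(m/\log\log N)\rceil}$; taking the logarithm gives exactly the factor $\log(m/\log\log N)\,\log\log N$ in the statement. The missing idea, then, is not a different $D$ but a refinement of the counting already present in your first paragraph.
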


\begin{proof}
We apply the probabilistic method to prove the existence of $D$. If $D$ is chosen uniformly at random, among all $m$-vertex sets, then the probability of a single vertex not belonging to $D$ is $1-m/N$ and the probability of a subcube of size $kN/m$ being disjoint from $D$ approximates $\exp(-k)$ from below. We will choose $k$ to be large enough that that this probability multiplied by the number of hypercube subgraphs of the appropriate size remains small.

Thus, we need to bound the number of hypercube subgraphs of the size given in the statement of the lemma. More strongly, we bound the number of hypercubes in a larger set, the hypercube subgraphs of size at least $N\log\log N/m$. There are at most $3^{\log N}$ hypercube subgraphs of the whole graph, obtained by specifying, for each coordinate, whether it is fixed to $0$ or $1$ or free to vary. If such a hypercube subgraph is to have size at least $N\log\log N/m$, then at most $\log(m/\log\log N)$ of the coordinates may be fixed, so alternatively, there are at most $(3\log N)^{\lceil\log(m/\log\log N)\rceil}$ such subgraphs, obtained by specifying the indices of exactly $\lceil\log\tfrac{m}{\log\log N}\rceil$ coordinates and, for each of them, whether that coordinate is fixed to $0$ or $1$ or free to vary. Thus, the number of hypercube subgraphs of  size at least $N/m$ is at most $\min\left\{3^{\log N},(3\log N)^{\lceil\log(m/\log\log N)\rceil}\right\}$.

Now choose $k$ to be larger than
$$\ln\min\left\{3^{\log N},(3\log N)^{\left\lceil\log\tfrac{m}{\log\log N}\right\rceil}\right\}
=
O\left(\min\left\{\log\log N,\log\tfrac{m}{\log\log N}\log\log N\right\}\right).$$
For such $k$, $\exp(-k)$ is smaller than the inverse of the number of hypercubes of size $kN/m$. Therefore, the expected number of subcubes of size $kN/m$ that are disjoint from $D$ is strictly less than one, but this number is the expected value of a non-negative integer random variable (the number of subcubes disjoint from $D$ for a particular random choice of $D$) so there must exist a choice of $D$ that makes the number of disjoint subcubes zero.
\end{proof}

\section{Hypercube minors}

Next, we consider hypercube minors instead of hypercube subgraphs.
We say that coordinate $i$ is a \emph{bad coordinate} for a set $D$ of damaged vertices of a hypercube if there are two vertices in $D$ at distance at most two from each other, such that the one or two coordinates on which these vertices differ include~$i$. For instance, for the set $D$ of six vertices in a ten-dimensional hypercube given earlier, there are no bad coordinates: all members of $D$ are at distance exactly six from each other. In a $d$-dimensional hypercube for which $i$ is not a bad coordinate, contracting all edges between pairs of vertices that differ in coordinate $i$ results in an undamaged hypercube minor of dimension $d-1$. This is because every contracted vertex is formed from at least one undamaged vertex of the original hypercube, and every edge of the contracted hypercube corresponds to at least one edge between two undamaged vertices of the original hypercube.

\begin{lemma}
\label{lem:num-bad-coords}
For every hypercube and every set $D$ of $m$ vertices, there are at most $2m-2$ bad coordinates.
\end{lemma}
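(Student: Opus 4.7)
The plan is to charge each bad coordinate to an edge in a spanning forest of an auxiliary graph on $D$. Define $G_D$ to be the graph with vertex set $D$ whose edges are the pairs of vertices at Hamming distance at most two in the hypercube, and for each such edge $e=\{u,v\}$ let $\ell(e)$ be the set of one or two coordinates on which $u$ and $v$ differ. By the definition of bad coordinate, every bad coordinate lies in $\ell(e)$ for at least one edge of $G_D$. Fix any spanning forest $T$ of $G_D$; it has $m-c$ edges, where $c\ge 1$ is the number of connected components of $G_D$.

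The key step, which I expect to be the crux of the proof, is to show that $T$ already captures every bad coordinate: for each bad coordinate $i$ there is some edge $e\in E(T)$ with $i\in\ell(e)$. Given a witnessing pair $u,v\in D$ whose coordinate difference includes $i$, the vertices $u$ and $v$ lie in a common component of $G_D$, so $T$ contains a path $u=w_0,w_1,\dots,w_k=v$. Viewing the per-edge differences as vectors in $\mathbb{F}_2^d$, the telescoping identity $\bigoplus_{j=0}^{k-1}(w_j\oplus w_{j+1}) = u\oplus v$ shows that the $i$th coordinate of the right-hand side, which is $1$, is flipped an odd number of times along the path. In particular at least one edge of $T$ flips coordinate $i$, so $i\in\ell(e)$ for that edge.

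Once this claim is established the bound is immediate: the set $B$ of bad coordinates satisfies $|B|\le\sum_{e\in E(T)}|\ell(e)|\le 2|E(T)|=2(m-c)\le 2m-2$, the last inequality using $c\ge 1$ which holds as soon as $D$ is nonempty. The remainder of the argument is routine bookkeeping; the only slightly non-obvious ingredient is the choice of auxiliary graph together with the observation that the XOR of coordinate differences is additive along paths, which converts the global question about witnessing pairs into a local question about a single tree edge.
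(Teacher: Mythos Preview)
Your proof is correct and follows essentially the same strategy as the paper's: bound the bad coordinates by twice the number of edges in a forest on $D$, using the $\mathbb{F}_2$-parity of coordinate flips. The only cosmetic difference is the direction of the argument---the paper starts from a minimal edge set covering all bad coordinates and uses the parity observation to show it is acyclic, whereas you start from an arbitrary spanning forest and use the same observation (telescoping XOR along a path) to show it already covers every bad coordinate.
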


\begin{proof}
Draw a graph $F$ on $D$, connecting a subset of the pairs of vertices at distance two from each other, with the subset chosen to be minimal with the property that every bad coordinate~$i$ is represented by an edge in $B$ between two vertices that differ in coordinate~$i$. Then $F$ must be acyclic, for if an edge of $e$ belonged to a cycle then the bad coordinates represented by~$e$ would also each be represented by at least one other edge of the cycle. Therefore, $F$ is a forest, with at most $m-1$ edges. Each bad coordinate is covered by an edge in $F$, and every edge covers at most two bad coordinates, so the number of bad coordinates is at most $2m-2$.
\end{proof}

An example showing Lemma~\ref{lem:num-bad-coords} to be tight may be constructed by letting $D$ consist of the origin and of $m-1$ points at distance two from it, no two sharing the same nonzero coordinate.

\begin{theorem}
\label{thm:hc-minor}
For every hypercube graph with $N$ vertices, and every set $D$ of $m$ damaged vertices, there is a hypercube shallow minor of depth 1 disjoint from~$D$ and containing $\Omega(\min\{N,(N/m)\log(N/m)\})$ vertices.
\end{theorem}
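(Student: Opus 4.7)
The plan is to combine a partitioning argument with the single-coordinate contraction described just before Lemma~\ref{lem:num-bad-coords}: contracting along a good coordinate $i$ of a hypercube with damaged set $D$ yields an undamaged hypercube minor of one lower dimension, each of whose contracted sets $V_j$ is either a single undamaged vertex or an undamaged edge of the original hypercube. Any such minor is automatically a shallow minor of depth at most~$1$, which is exactly the depth the theorem allows.

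First I would handle the case when $m$ is small enough that $2m - 2 < \log N$: Lemma~\ref{lem:num-bad-coords} applied to the whole hypercube then directly produces a good coordinate, and contracting it yields a depth-$1$ hypercube minor on $N/2$ vertices. Since this hypothesis on $m$ is equivalent (up to constants) to $(N/m)\log(N/m) = \Omega(N)$, the bound $N/2 = \Omega(\min\{N,(N/m)\log(N/m)\})$ holds in this regime. Otherwise, I would partition the hypercube into $p=\Theta(m/\log(N/m))$ subcubes (rounding $p$ to a power of two) by fixing the values of a suitable set of $\log p$ coordinates. An averaging argument selects a subcube $H$ containing at most $m/p=O(\log(N/m))$ damaged vertices, while $\dim H = \log(N/p) = \log(N/m) + \log\log(N/m) - O(1)$. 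Applying Lemma~\ref{lem:num-bad-coords} inside $H$ bounds its number of bad coordinates by $2(m/p) - 2$; by choosing the implicit constant in $p$ large enough to make $2(m/p) < \dim H$, at least one good coordinate of $H$ survives. Contracting it produces a depth-$1$ hypercube minor of $H$, and hence of $G$, of size $N/(2p) = \Omega((N/m)\log(N/m))$, as required.

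The main technical obstacle is balancing the partition parameter $p$ against two competing requirements: $p$ must be small enough that $N/p$ attains the target minor size, yet large enough that each subcube contains so few damaged vertices that Lemma~\ref{lem:num-bad-coords} still leaves a good coordinate available. These constraints essentially force the choice $p=\Theta(m/\log(N/m))$, and pinning down the hidden constants is a routine calculation. Degenerate cases---such as $m$ close to $N$, where $(N/m)\log(N/m)=O(1)$ and a single undamaged vertex suffices, or the seam between the two main regimes---are then handled with a brief consistency check.
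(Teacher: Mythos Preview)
Your proposal is correct and follows essentially the same argument as the paper: split into the small-$m$ case where Lemma~\ref{lem:num-bad-coords} applied to the whole hypercube already yields a good coordinate, and otherwise partition into $p=\Theta(m/\log(N/m))$ subcubes, average to find one with $O(\log(N/m))$ damaged vertices, and apply the lemma there. The paper makes the same parameter choice (taking $p$ to be the smallest power of two above $2m/\log(N/2m)$), and your observation that the contracted sets are singletons or edges is exactly what makes the resulting minor depth~$1$.
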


\begin{proof}
If $m\le \tfrac12\log N$, by Lemma~\ref{lem:num-bad-coords} there is a coordinate that is not bad, and contracting that coordinate produces a hypercube minor with $N/2$ vertices. Otherwise, let $p$ be the smallest power of two greater than $2m/\log(N/2m)$, partition into $p$ disjoint smaller hypercubes, and choose one in which the number of damaged vertices is at most $\tfrac12\log(N/2m)$. The number of vertices in this hypercube is large enough that there necessarily exists a coordinate that is not bad, and again contracting that coordinate produces the desired minor.
\end{proof}

\begin{figure}[t]
\centering\includegraphics[width=2in]{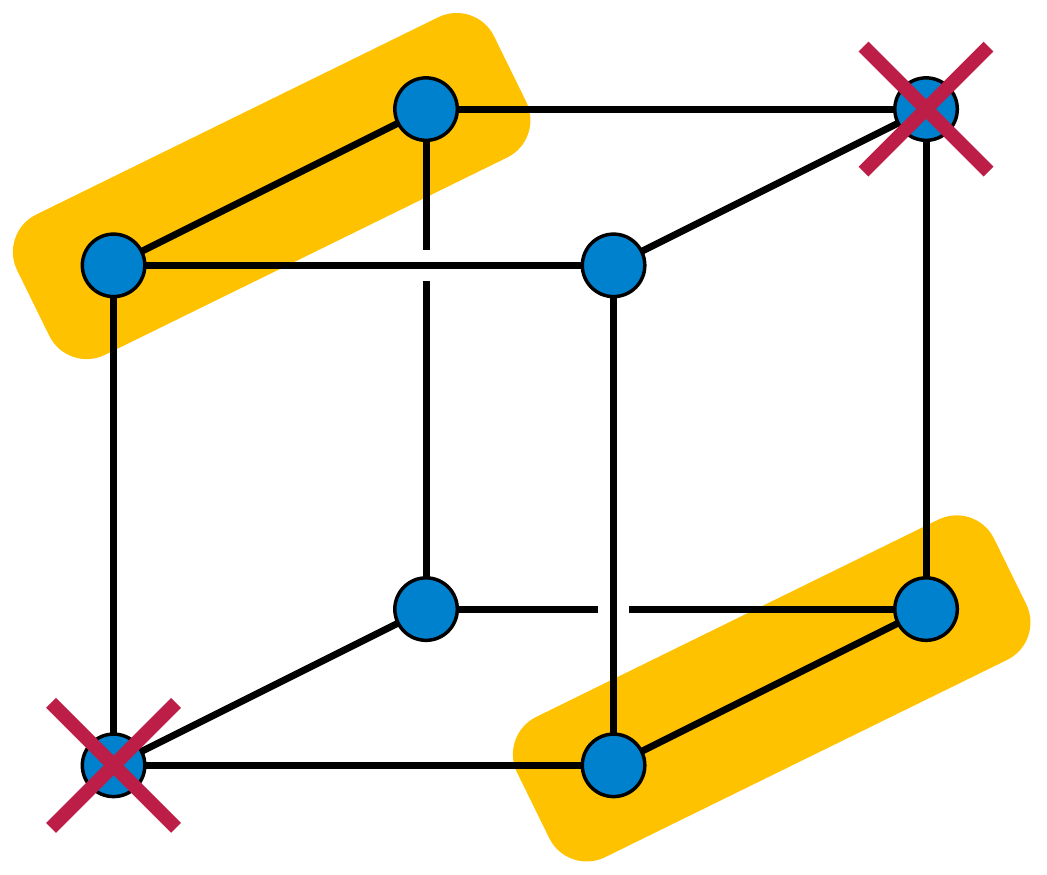}
\caption{Deleting two opposite vertices of a hypercube (here of dimension $d=3$, with the deleted vertices marked by red crosses) eliminates all ($d-1$)-dimensional hypercube subgraphs, but contracting the edges (shaded in yellow) that correspond to a non-bad coordinate produces a ($d-1$)-dimensional hypercube minor.}
\label{fig:hc-sg-vs-m}
\end{figure}
 
For example, with $N=8$ and $m=2$, choosing $D$ to be the two opposite corners of a cube reduces the number of vertices in the largest hypercube subgraph to be only two. However, no matter how $D$ is chosen, at most two of the three coordinates may be bad, so there is always a way to contract one coordinate and produce a square minor with four vertices (Figure~\ref{fig:hc-sg-vs-m}).

Unfortunately we have no nontrivial upper bound on the size of hypercube minors, corresponding to the bound of Theorem~\ref{thm:hc-sg-ub} on hypercube subgraphs. However, although our upper and lower bounds for hypercube subgraphs and minors are not far from each other, they are  strong enough to show that the functions mapping $N$ and $m$ to the size of the largest hypercube subgraph and the size of the largest hypercube minor respectively may differ by more than a constant factor. For instance, for $m=\Theta(\log N)$, Theorem~\ref{thm:hc-minor} shows that there is still a hypercube minor of size $\Omega(N)$, while Theorem~\ref{thm:hc-sg-ub} shows that the largest hypercube subgraph may be forced to be smaller by a factor of at least
$$\Theta\left(
\frac{\log N}{(\log\log N)^2}\right).$$

\subsection*{Acknowledgements}
This research was supported in part  by the National Science Foundation under grants 0830403 and 1217322, and by the Office of Naval Research under MURI grant N00014-08-1-1015. I thank Tasos Sidiropoulos for suggesting this problem, and both Tasos and Chandra Chekuri for encouraging me to write up these results as a paper.

{\raggedright
\bibliographystyle{abuser}
\bibliography{grid-minors}}

\end{document}